\definecolor{link}{rgb}{1,0.45,0.05}
\newcommand{\ii}{\mathrm{i}}
\newcommand{\dd}{\mathrm{d}}
\newcommand{\ee}{\mathrm{e}}
\newcommand{\RCD}{\mathsf{RCD}}
\DeclareMathOperator*{\sgn}{sgn}
\newcommand{\mm}{\mathfrak m}
\newcommand{\di}{\mathsf{d}}
\theoremstyle{plain}
\newtheorem{lemma}{Lemma}[section]
\newtheorem{theorem}[lemma]{Theorem}
\newtheorem{proposition}[lemma]{Proposition}
\newtheorem*{theorem*}{Theorem}
\newtheorem*{maintheorem*}{Main Theorem}
\theoremstyle{definition}
\newtheorem{definition}[lemma]{Definition}
\newtheorem*{definition*}{Definition}
\newtheorem*{remark*}{Remark}
\newtheorem{remark}[lemma]{Remark}
\begin{document}

	\begin{titlepage}

	\begin{center}

	\vskip .5in 
	\noindent

	{\Large \bf{Can you hear the Planck mass?}}

	\bigskip\medskip
	 G. Bruno De Luca,$^1$  Nicol\`o De Ponti,$^2$
	Andrea Mondino,$^3$	Alessandro Tomasiello$^{2,4}$\\

	\bigskip\medskip
	{\small 
$^1$ Stanford Institute for Theoretical Physics, Stanford University,\\
382 Via Pueblo Mall, Stanford, CA 94305, United States
\\	
	\vspace{.3cm}
	$^2$ 
	Dipartimento di Matematica e Applicazioni, Universit\`a degli Studi di Milano--Bicocca, \\ Via Cozzi 55, 20126 Milano, Italy
\\	
	\vspace{.3cm}
	$^3$ Mathematical Institute, University of Oxford, Andrew-Wiles Building,\\ Woodstock Road, Oxford, OX2 6GG, UK
\\
	\vspace{.3cm}
$^4$ INFN, sezione di Milano--Bicocca
		}

   \vskip .5cm 
	{\small \tt gbdeluca@stanford.edu, nicolo.deponti@unimib.it,\\ andrea.mondino@maths.ox.ac.uk, alessandro.tomasiello@unimib.it}
	\vskip .9cm 
	     	{\bf Abstract }
	\vskip .1in
	\end{center}
	
	\noindent
	For the Laplacian of an $n$-Riemannian manifold $X$, the Weyl law states that the $k$-th eigenvalue is asymptotically proportional to $(k/V)^{2/n}$, where $V$ is the volume of $X$.
	We show that this result can be derived via physical considerations by demanding that the gravitational potential for a compactification on $X$ behaves in the expected $(4+n)$-dimensional way at short distances. In simple product compactifications, when particle motion on $X$ is ergodic, for large $k$ the eigenfunctions oscillate around a constant, and the argument is relatively straightforward. The Weyl law thus allows to reconstruct the four-dimensional Planck mass from the asymptotics of the masses of the spin 2 Kaluza--Klein modes.
		For warped compactifications, a puzzle appears: the Weyl law still depends on the ordinary volume $V$, while the Planck mass famously depends on a weighted volume obtained as an integral of the warping function. We resolve this tension by arguing that in the ergodic case the eigenfunctions oscillate now around a power of the warping function rather than around a constant, a property that we call \emph{weighted quantum ergodicity}. This has implications for the problem of gravity localization, which we discuss.
We show that for spaces with D$p$-brane singularities the spectrum is discrete only for $p =6,7,8$, and for these cases we rigorously prove the Weyl law by applying modern techniques from RCD theory.
			\noindent

	\vfill
	\eject

	\end{titlepage}
    
\tableofcontents
\section{Introduction} 
\label{sec:intro}

In gravitational models with $n$ extra dimensions, a four-dimensional observer sees a tower of Kaluza--Klein (KK) massive particles. The latter give rise to Yukawa-type potentials, that decay exponentially; when the distance  $r$ between particles is very large, they can be ignored, and the gravitational potential $U$ is proportional to the familiar $1/(m_4^2 r)$, with $m_4$ the four-dimensional Planck mass.\footnote{We focus on $d=4$ uncompactified dimensions, but a similar logic applies to $d\neq 4$.} On the contrary, when $r$ is much smaller than the size of the extra dimensions, one expects that the KK particles conspire to reproduce the behavior of gravity in $D=(4+n)$ spacetime dimensions, namely $U\sim m_D^{2-D}/r^{D-3}\propto m_D^{2-D}/r^{1+n}$, with $m_D$ the $D$-dimensional Planck mass.

We will see that such an expectation implies the \emph{Weyl law}, a well-known property of the Laplace operator $\Delta$: on an $n$-dimensional Riemannian manifold $X$, the $k$-th eigenvalue (counted with multiplicity) behaves at large $k$ as 
\begin{equation}\label{eq:weyl}
	\lambda_k \sim a^2 \left(\frac{k}{V(X)}\right)^{2/n} \, ,\qquad a = \frac{2\pi}{\omega_n^{1/n}} \,,
\end{equation}
where $\omega_n := \pi^{n/2}/ \Gamma(1+n/2)$ is the volume of the $n$-dimensional Euclidean ball. The exponent $2/n$ is precisely the right value so that $U\propto 1/r^{1+n}$. The term $V(X)$  in the coefficient in (\ref{eq:weyl}) is the usual Riemannian volume; it helps convert the four-dimensional Newton constant into the $D$-dimensional one, since for direct-product spacetimes $m_4^2=m_D^{D-2} V(X)$. Thanks to this property, knowing the asymptotic behavior of the spin 2 Kaluza--Klein masses allows to determine the four-dimensional Planck mass in terms of the higher-dimensional Planck mass.

This gravitational argument can be seen as a variant of a more classical argument using the heat equation. In its traditional form, the latter requires at some point to take an average over $X$; this step is needed because of the  appearance of the eigenfunctions $\psi_k$ of the Laplace operator. This can also be done for our case.
However, it is more natural to consider the behavior of particles that are localized at some point of $X$. 
As we will see, if $X$ has the \emph{ergodic} property, which is expected to be generic, one can straightforwardly derive the Weyl law without needing to integrate over $X$. Classically, this means roughly speaking that the geodesic motion at large time explores uniformly all of $X$ (or more precisely all of phase space). This also implies \emph{quantum} ergodicity \cite{schnirelman,colindeverdiere,zelditch-surfaces}: roughly speaking this means that, for large $k$, the eigenfunction $\psi_k^2$ oscillates around a constant value. As a consequence, for any measurable open set $B\subset X$ the integral $\int_B |\psi_k|^2 \to V(B)/V(X)$. (More precisely, this is true up to a subset of measure zero of the indexes $k$.) 

Everything looks natural so far. Our work was actually motivated by a puzzle that appears in the general case of warped compactifications, namely those where the total $D$-dimensional metric is not a direct product, but reads as $\dd s^2= \ee^{2A}(\dd s^2_{M_4} + \dd s^2_{X})$; the warping $A$ is a function of $X$, and is only defined up to the shift $A \mapsto A - A_0$, for a constant $A_0$, since such a transformation can be reabsorbed by rescaling $\dd s^2_{M_4}$. The relevant KK square masses are now eigenvalues of the weighted Laplacian \cite{bachas-estes,csaki-erlich-hollowood-shirman}:
$$\Delta_f\psi := -\ee^{-f} \nabla^m (\ee^{f} \nabla_m\psi),\quad \text{with $f:=(D-2)A$}.$$
The relation between the Newton constants is now $G_4 \propto G_D/V_f(X)$, where
\begin{equation}\label{eq:wV}
	V_f (X):= \int_{X}\dd^n y\sqrt{g_n}\ee^{f} 
\end{equation}
can be thought of as a \emph{weighted volume} ($y$ being the internal coordinates). However, known mathematical results show that the eigenvalues of $\Delta_f$ still obey the Weyl law (\ref{eq:weyl}) with the ordinary, unweighted volume $V(X)$. As we will review in Sec.~\ref{sec:weyl-rev}, this has been established even in situations where $X$ has certain types of singularities. 

Naively, it would then seem that our gravitational argument for the Weyl law would break down for warped compactification: dressing with a warping the physical quantities would predict to substitute the volumes with warped volumes, thereby predicting the appearance of the Planck mass in the law, as in the unwarped case.

The resolution turns out to hinge on the role of the wavefunctions $\psi_k$. We saw earlier that quantum ergodicity predicts them to oscillate around a constant value; but that statement has been proven for the ordinary curved-space Laplace operator. For the weighted Laplacian $\Delta_f$, which is the one relevant to warped compactifications, this property does not hold. However, analyzing the dependence of the gravitational potential on the internal position of the test particles suggests a replacement: at large $k$, the $\psi_k^2$ actually tend to oscillate instead around a non-constant function proportional to $\ee^{-f}$, as depicted in Fig.~\ref{fig:WQE-1d}. It is well-known that the natural measure for the inner products and norms includes a factor $\ee^{f}$. When we consider the squared norm on a measurable $B$, we obtain
\begin{equation}\label{eq:WQE-intro}
	\frac{\int_B \sqrt{g}\ee^{f}\psi_k^2}
	{\int_X \sqrt{g}\ee^{f}\psi_k^2}\quad 
	 \underset{k\to \infty}{\longrightarrow} \quad \frac{V(B)}{V(X)} \,.
\end{equation}
Notice that the ordinary volume $V(X)$ still appears. It is natural to call this the \emph{weighted quantum ergodicity} (WQE) property. 
\begin{figure}[h]
	\centering
	  \includegraphics[width=9cm]{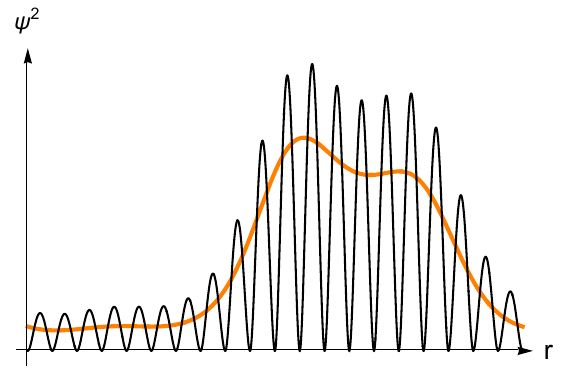}
	\caption{Numerical check of weighted quantum ergodicity for the example in App.~\ref{sub:ex-erg}, where $M_1=\mathbb{S}^1$ and $f=\sin(x)+\cos^3x$. In orange, the function $\ee^{-f}/V$ predicted by (\ref{eq:WQE-intro}); in black, $\psi^2_k/V_f$ for $k=20$.}
	 \label{fig:WQE-1d}
  \end{figure}
While we have not proven the WQE, we have tested it in some simple models, as in the example in Fig.~\ref{fig:WQE-1d}, which we analyzed in detail in App.~\ref{sub:ex-erg}. Moreover, it is simple to see that the gravitational potential now exactly reproduces the physics expectation: the factor of $V(X)$ in the Weyl law is provided by (\ref{eq:WQE-intro}). This explains why the Weyl law is not in contradiction with the physics of gravity compactifications. Thus, in the general case of warped compactifications, the knowledge of the asymptotics of the KK spectrum \emph{does not allow to reconstruct the Planck mass}, as summarized in Table \ref{tab:recap}.

\begin{table}[!ht]
	\centering
	\begin{tabular}{l|c|c|c|c|}
	\cline{2-5} &Spin 2 operator
	& Planck Mass & Weyl Law & Ergodicity \\ \hline
	\multicolumn{1}{|l|}{Unwarped} & $\Delta$& $m_4^2  = m_D^{D-2} V$& $m_k^2  \sim V^{-2/n} k^{2/n}$& $\frac{\int_B \sqrt{g}\psi_k^2}
	{\int_M \sqrt{g}\psi_k^2} 
	 \to   \frac{V(B)}{V(X)}$\\ \hline
	\multicolumn{1}{|l|}{Warped} &$\Delta - \nabla f \cdot \nabla$ & $m_4^2  = m_D^{D-2} V_f$ & $m_k^2  \sim V^{-2/n} k^{2/n}$ & 
	$\frac{\int_B \sqrt{g}\ee^{f}\psi_k^2}
	{\int_M \sqrt{g}\ee^{f}\psi_k^2} 
	 \to   \frac{V(B)}{V(X)} $ \\ \hline
	\end{tabular}
	\caption{A comparison of different quantities between warped and unwarped compactifications. Notice that in the warped case the coefficient in the Weyl law does \emph{not} correspond to the Planck mass.}
	\label{tab:recap}
	\end{table}

We begin in Sec.~\ref{sec:grav} by setting the stage and deriving (\ref{eq:grav-weyl}), a relation involving a certain sum over eigenvalues and eigenfunctions. We give both a more physical gravitational argument, and one that is more precise but partially abandons physical intuition. In Sec.~\ref{sec:grav-weyl} we apply (\ref{eq:grav-weyl}) to obtain the Weyl law. Again we give two versions: a less physical argument gets the result quickly but involves smearing over the internal space. 
The more physical unsmeared argument requires some knowledge about properties of eigenfunctions, which leads us in Sec.~\ref{sub:wqe} to the idea of weighted quantum ergodicity. As an application, we point out in Sec.~\ref{sub:loc} that WQE might also help establishing gravity localization in some models, since it gives a natural way to suppress the graviton eigenfunctions in some regions.

While these physical derivations allow to prove the Weyl law under certain hypotheses, they are not rigorous mathematical proofs for the general case, in particular in the presence of singularities. Thus, in Sec.~\ref{sec:weyl-rev}, we review known mathematical results concerning the Weyl law in the RCD setting, and apply them to rigorously prove the Weyl law for general compactifications with D$p$-brane singularities, for $p = 6,7, 8$. For $p<6$ we formalize in Prop.~\ref{prop:Dbranediscrete} our earlier results that show the spectrum is instead continuous. While an appropriate rigorous generalization of the Weyl law might be given also for these case of continuum spectrum, we do not attempt this here.

We conclude in Sec.~\ref{sec:conc}. App.~\ref{app:ex} contains some explicit examples on the validity of weighted quantum ergodicity, and App.~\ref{app:heat} reviews an older mathematical argument involving the heat equation, for the Weyl law in the unwarped case. Finally, App.~\ref{app:green} contains a technical lemma regarding Green's functions for the weighted Laplacian.

\section{Gravitational potential} 
\label{sec:grav}

After some brief preliminaries on the general setting, we will discuss the gravitational potential of a compactification. This will result in a physical expectation on the spectrum, which we will use in the next section to constrain the asymptotic behavior of the eigenvalues.

\subsection{Preliminaries}
\label{sub:prel}


The Weyl law was originally established more than a century ago for the Laplace operator on bounded domains in two dimensions. Subsequent work generalized this to higher dimensions, to Riemannian manifolds, and gave information about the subleading behavior. See for example \cite{ivrii-review} for a historical review.

Here we also need the generalization regarding the weighted Laplacian, on a space $X$ with weight function $f$:
\begin{equation}\label{eq:w-Lap}
	\Delta_f \psi := -\ee^{-f}\nabla_m (\ee^f\nabla^m \psi) = -\frac1{\ee^f\sqrt{g}} \partial_m ( \ee^f\sqrt{g} g^{mn} \partial_n \psi)\,.
\end{equation}
Its eigenvalues and eigenfunctions are defined by $\Delta_f \psi_k = \lambda_k \psi_k$. We will focus on the case where the spectrum is discrete, and $k$ runs over the integers.\footnote{While some of our arguments below can probably be extended to cases where the spectrum also has a continuous part by using the spectral measure, this case appears to be much less studied mathematically in our setting.} This is the case when $X$ is smooth and compact, but as we will see below this holds also in the presence of many types of singularities.

As we saw in the introduction, the Weyl law remains (\ref{eq:weyl}) also for the weighted case $f \neq 0$. A perhaps slightly clearer mathematical formulation is
\begin{equation}\label{eq:weyl-count}
	N(\lambda):=\#\{k\in \mathbb{N} : \lambda_k < \lambda\} \sim  \frac{V(X)}{a^n} \lambda^{n/2} \qquad\textrm{as } \lambda\to+\infty \,,
\end{equation}
where the eigenvalues should be counted with multiplicity.

In the next sections, we will derive a relation among eigenvalues and eigenfunctions that will serve as the basis for obtaining the Weyl law in the rest of the work.

\subsection{Compactification} 
\label{sub:comp}

We consider a $D$-dimensional theory that includes the usual Einstein--Hilbert action $S_D= m_D^{D-2}\int \dd^D x\sqrt{-g_D}R_D$. To compute the potential between two mass sources $M_{1,2}$ we will solve the equations of motion in the presence of $M_1$, and compute the gravitational potential felt by a probe $M_2$. For this, we will use the source actions $S_{M_{1,2}} \equiv - M_{1,2} \int_{\Sigma}\sqrt{-g_D\rvert_{\Sigma}}$, where $\Sigma$ denotes the world-volume of the source. 
\newline

\noindent We will compare two situations:
\begin{enumerate}[label=\roman*)]
	\item Higher-dimensional space-time: The space-time has the topology of $\mathbb{R}^{1,D-1}$ with the Minkowski metric.
	\item Four-dimensional vacuum compactification: The space-time has the topology of $\mathcal{M}_4\times X$. The product is generally warped, with a warping depending on the $(D-4)$-dimensional internal space $X$.
\end{enumerate}
Denoting with $\rho$ the proper distance between the two particles, we expect the gravitational potentials to agree in the limit $\rho \to 0$, since in the compactified theory we expect to reproduce higher-dimensional gravity at distances shorter than the compactification scale.

We start with i), and we work in a regime where $M_{1,2}\ll m_D$, or in other words where the test masses have a Schwarzschild radius much smaller than their Compton wavelength. We then have $g_{MN}\sim \eta_{MN}+m_D^{2-D} h_{MN}$, $h_{MN}\ll m_D^{D-2}$. 
In the TT gauge, where $\nabla^M h_{MN}=0=h^M_M$, at quadratic level in $h$, the action describing the backreaction of $M_1$ on the background reads 
\begin{equation}
	\delta^2S = \frac{m_D^{2-D}}{2}\int \sqrt{-\eta_D} h^{M N}\nabla^2 h_{M N} +m_D^{2-D} \frac{M_1}{2} \int_\Sigma (-\eta_{00})^{-1/2}h^{00} \;.
\end{equation}
Since the source is static, we can specialize to time-independent perturbations, resulting in the following equation of motion for $h_{00}$:
\begin{equation}
	\Delta_{D-1} h_{00} = \frac{M_1}{2} \delta_{D-1}\,,
\end{equation} 
where $\delta_{D-1}$ is a delta function localizing $M_1$. Notice that since we are in flat space, $-\nabla^2$ reduces to the scalar Laplacian.
Thus, $h_{00}$ is a Green's function of the scalar Laplacian in $\mathbb{R}^{D-1}$. We will re-use this fact in the more precise derivation of the Weyl law in Sec.~\ref{sub:grav-sc}. 

We are left with the task of determining the potential felt by a probe $M_2$. Expanding the probe action at the same order we get $U(x_0) = M_2 \left(1- \frac{m_D^{2-D}}{2} h_{00}(x_0)\right)$. 
Using radial coordinates in $\mathbb{R}^{D-1}$ centered on $M_1$ we have 
\begin{equation}
	h_{00} = - \frac{M_1}{2} \frac{c}{\rho^{D-3}} \, ,\qquad
	c:= \frac1{(D-3)V(\mathbb{S}^{D-2})}\,,
\end{equation}
resulting in the gravitational potential
\begin{equation}
	U_D(\rho) = M_2 \left(1- M_1 \frac{m_D^{2-D}}{4} \frac{c}{\rho^{D-3}}\right)\,.
\end{equation}

We can now turn our attention to the case ii) of a  \emph{warped compactification}, namely, a spacetime with a line element $\dd s^2 = \ee^{2A}\left(\dd s^2_4 + \dd s^2_{X}\right)$, with $A$ a function on $X$. We take $X$ to be compact and $\dd s^2_4$ to be the Minkowski$_4$ line element; the other maximally symmetric spacetimes, AdS$_4$ and dS$_4$, can also be accommodated, with suitable adjustments. (As already mentioned in the introduction, changing the number of macroscopic uncompactified dimensions is also possible.) $D$-dimensional gravity can be approximated at large distances by a four-dimensional action $S_4+ S_\mathrm{KK}$. Here $S_4=m_4^2\int \dd^4 x\sqrt{-g_4}R_4$, with
\begin{equation}\label{eq:GG}
	m_4^2= m_D^{D-2}V_f(X)\,;
\end{equation}
recall that $V_f$ is the weighted volume (\ref{eq:wV}). $S_\mathrm{KK}$ represents the contribution of infinitely many matter fields; in particular, there are infinitely many spin-two fields.\footnote{A detailed discussion of $S_\mathrm{KK}$ for the pure gravity case with zero warping is given in \cite{hinterbichler-levin-zukowski}. An earlier review with a broader outlook is \cite{duff-nilsson-pope}.} While the full general expression of this action is not known when the warping is non-trivial, its quadratic expansion around vacua is universal. Following \cite{bachas-estes,csaki-erlich-hollowood-shirman} and reducing the source action, we find
\begin{equation}
	\begin{split}
	2 m_4^2\delta^2 S =&   \sum_{k} \int_{\mathcal{M}_4} \sqrt{-\bar{\eta}}\left(\bar{h}_k^{\mu \nu} \bar{\nabla}^2 \bar{h}_{\mu \nu}^k + m_k^2 \bar{h}_k^{\mu \nu}  \bar{h}_{\mu \nu}^k\right) \\&+\sum_k M_1 e^{A(y_0)} \psi_k(y_0) \int_{\mathcal{M}_4} (-\bar{\eta}_{00})^{-1/2}\bar{h}_{00}^k \delta_3\;.
\end{split}
\end{equation}
To write this action, we have considered transverse-traceless perturbations of the \emph{unwarped} four-dimensional metric, that is the $D$-dimensional metric is perturbed as
\begin{equation}
	\dd s^2 = \ee^{2A}\left( (\bar{\eta}_{\mu \nu}+m_4^2 \bar{h}_{\mu \nu}(x,y)) \dd x^{\mu} \dd x^{\nu}+ \dd s^2_{X}(y)\right)	\;
\end{equation}
where $\bar{\eta}_{\mu \nu}$ is the four-dimensional Minkowski metric, possibly multiplied by a constant that we will fix later. In addition, $m_k^2$ and $\psi_k$ are respectively eigenvalues and eigenfunctions of the weighted Laplacian (\ref{eq:w-Lap}) relative to the weight function \cite{bachas-estes,csaki-erlich-hollowood-shirman}
\begin{equation}\label{eq:fA}
	f:= (D-2) A= (n+2)A\,.
\end{equation}
We have normalized the eigenfunctions so that 
\begin{equation}\label{eq:psi-norm}
	\int_{X} \sqrt{g}\ee^f \psi_k \psi_l= V_f(X) \delta_{kl}\,,
\end{equation}
where $V_f(X)$ is the weighted volume (\ref{eq:wV}).
Since $\Delta_f$ in (\ref{eq:w-Lap}) is self-adjoint, its eigenfunctions are an $L^2$-basis for functions in the internal space, and we have expanded $\bar{h}_{\mu \nu}(x,y):= \sum_k \bar{h}^k_{\mu \nu}(x)\psi_k(y)$. Finally, we have also assumed that the particle $M_1$ sits at the internal position $y = y_0$, and called $\delta_3$ the delta-function that localizes it in non-compact space.
From this action, we obtain that the $\bar{h}_{00}^k$ satisfy the equation of motion
\begin{equation}\label{eq:green4dph}
	(\bar{\Delta}_3 - m_k^2) \bar{h}^k_{0 0} = \frac{M_1}{2} \ee^{A (y_0)} \psi_k
	(y_0) \frac{\delta_3 (x)}{\sqrt{\bar{\eta}_3} } (- \bar{\eta}_{00})\,,
\end{equation}
where, as above, we have assumed static perturbations. 
We are now interested in the potential energy  $U_{4d}$ between $M_1$ and a probe mass $M_2$ which sits at the same internal point $y_0$ and is separated by a proper distance $\rho$ from $M_1$ in the uncompactified directions.  Expanding the probe action, we find $	U_{4d}(x_0) =  M_2 e^{A (y_0)} \sqrt{- \bar{\eta}_{00}} \left( 1
- \frac{m_4^{- 2}}{2}  \frac{\bar{h}_{00} (x_0, y_0)}{- \bar{\eta}_{00}} 
\right)
$. Solving \eqref{eq:green4dph}, we get for $m_k \neq 0$ the Yukawa expression
\begin{equation}
	\bar{h}_{00}^k = - 
	\frac{M_1}{8\pi}
	\frac{\ee^{- m_k r}}{r} \ee^{A (y_0)} \psi_k (y_0) (-
\bar{\eta}_{00})\,,
\end{equation}
where $r$ is the four-dimensional radial distance from $M_1$, computed with respect to the metric $\bar{\eta}$.
Before plugging it in the gravitational potential, recall that we need to express the gravitational potential in terms of the proper distance $\rho$. At leading order, this is related to $r$ by $\rho = \ee^{A(y_0)} r$. In addition, before comparing the two potentials at short distances, we have to make sure that we are computing the energy with respect to the same reference. That is, we also have to require that $U_{4d} \sim U_D$ for $\rho \to \infty$. This fixes $\bar{\eta}_{00} = - \ee^{-2A(y_0)}$. 
All in all, we find
\begin{equation}
	U_{4d}(\rho)
	\sim U_D(\rho)\quad \implies\quad 
	m_4^{- 2} e^{(D - 2) A (y_0)} \sum_k \ee^{- m_k r} \psi_k^2 (y_0)
	 \sim 4\pi c\, m_D^{2 - D} \frac{1}{r^n}\,,
	\quad \text{as } r\to 0 \,.
\end{equation} 

Recalling also (\ref{eq:GG}) to compare the Planck masses, this gives the prediction\footnote{We have also rewritten the numerical factors using the identity $(n+1)!V(\mathbb{S}^{n+2})V(\mathbb{S}^{n-1})= 2 (2\pi)^{n+1}n$, which in turn uses the duplication formula for $\Gamma(z)$. Recall that $\omega_n= V(B_n)$, the volume of the $n$-dimensional Euclidean ball $B_n$.}
\begin{equation}\label{eq:grav-weyl}
    \sum_{k=0}^\infty \ee^{-m_k r} \psi_k^2(y_0) e^{f(y_0)} \quad \sim \quad  \frac{n!\omega_n V_f(X)}{(2\pi r)^n} \qquad \text{as }r\to 0\,.
\end{equation}

In Sec.~\ref{sec:grav-weyl} we will use this physical expectation to derive the Weyl law in various ways.


\subsection{Scalar version} 
\label{sub:grav-sc}

The essence of the gravitational argument leading to \eqref{eq:grav-weyl} involves the comparison of the Green's function of (weighted) Laplacians in different dimensions, close to their poles. In this section, we dispense with the gravitational intuition and derive \eqref{eq:grav-weyl} by directly comparing these Laplacians in a more rigorous way.

While for simplicity we will use the language of Riemannian geometry and state our results for smooth manifolds, the results in this section apply to more general $n$-dimensional spaces $X$, which can be smooth Riemannian manifolds of finite diameter, asymptotically D$p$-branes (for $p\geqslant 6$) \cite[Def.~6.1]{deluca-deponti-mondino-t-entropy}, or smooth spaces with O-plane singularities, provided the spectrum is discrete and that the Green's functions are centered at smooth points.

To start, consider a product $p$-dimensional Riemannian manifold $M_p \equiv \mathbb{R}^3\times X$, where $X$ has dimension $n = p-3$, with product metric
\begin{equation}
\dd s^2_p(z) = \delta_{ij}\dd x^i \dd x^j	+\dd s^2_n(y)\,	.
\end{equation}
We use coordinates $z=(x,y)$ to denote points on the different factors. We can think of this space as a spatial slice of the `unwarped' $D$-dimensional space-time of Sec.~\ref{sub:comp}.

On this space we will compare the behavior of $G_{0,z_0}$, the Green's function of the standard Laplacian $\Delta_0$ centered at $z_0$, with the behavior of $G_{f,z_0}$, the Green's function of the weighted Laplacian $\Delta_f \equiv \Delta_0 -\nabla f \cdot \nabla$ centered at the same point.
Since on a vacuum the warping can only depend on the internal coordinates, we take $f = f(y)$, so that
\begin{equation}\label{eq:fact}
	\Delta_f = -\partial_x^2+\Delta_f^{(n)},
\end{equation}
where $\Delta_f^{(n)}$ is a weighted Laplacian on $X$ with weight $f(y)$. In particular, we assume that it is self-adjoint and that its eigenfunctions $\psi_k$, defined by $\Delta_f^{(n)} \psi_k(y) \equiv m_k^2 \psi_k(y)$, provide a countable $L^2$-basis on which to expand $L^2$-functions on $X$. In particular, assuming the Green's function to be square integrable (a sufficient condition for this is that the internal space $X$ has finite diameter), we can write
\begin{equation}\label{eq:Gf}
G_{f, (x_0, y_0)}(x,y)\equiv \sum_k c_{f,k}(x)\psi_k(y)	
\end{equation}
where $z_0 = (x_0,y_0)$ and the limit in the series is understood in the $L^2$-topology. Using the factorization property \eqref{eq:fact} we can write $c_{f,k}(x)$ explicitly in terms of the Yukawa potential as follows.
Take a test function $\xi$ proportional to an arbitrary eigenfunction $\psi_j$,  $\xi(x,y) \equiv \xi_j(x)\psi_j(y)$ (no sum over $j$). By definition of Green's function, we have
\begin{align}
	\xi_j(x_0)\psi_j(y_0)  &=  \int_{\mathbb{R}^3} \dd^3x \int_{X} \sqrt{g_n} \dd^ny \,\ee^f\;\left(\sum_k c_{f,k}(x) \psi_k(y) \Delta_f \left(\xi_j(x)\psi_j(y)\right) \right) \nonumber\\
	&=  \int_{\mathbb{R}^3} \dd^3x \int_{X} \sqrt{g_n} \dd^ny \,\ee^f \;\left(\sum_k  c_{f,k}(x) \psi_k(y)\psi_j(y) \left(-\partial_x^2 + m_j^2\right)\xi_j(x)  \right)\nonumber\\
	&=  V_f(X) \int_{\mathbb{R}^3} \dd^3x  \;\left(  c_{f,j}(x)  \left(-\partial_x^2 + m_j^2\right)\xi_j(x)  \right)	\label{eq:gf1}
\end{align}
where we normalized the eigenfunctions $\psi_k$ as $\int_{X} \sqrt{g_n} \ee^f \psi_k \psi_j = V_f(X)\delta_{ij}$,  with $V_f(X)$ the weighted volume of $X$. In the second step, exchanging the summation and the integral is justified since the series \eqref{eq:Gf} converges in $L^2$ when the diameter is finite.
We can recognize that by definition $c_{f,j}$ is proportional to the Green's function of  the operator $-\nabla^2_x+m_j^2$ in $\mathbb{R}^3$, centered at $x_0$. This has the form of the Yukawa potential, and explicitly we have 
\begin{equation}\label{eq:cj}
	c_{f,j}(x) = - \frac{\psi_j(y_0)}{V_f(X)}\frac{\ee^{- m_j r}}{4 \pi r} \qquad \qquad \text{with} \;r \equiv |x-x_0|\;.
\end{equation}
Plugging \eqref{eq:cj} in \eqref{eq:Gf} we get an expression for $G_{f, (x_0, y_0)}(x,y)$ in terms of the spectral data of $\Delta_f^{(n)}$. Before using it to derive \eqref{eq:grav-weyl}, we also need the following Lemmas.

\begin{lemma}\label{lemma:ef0}
	Call $G_{f, z_0}$ the Green's function of the operator $\Delta_f$ centered at
$z_0$, and $G_{0, z_0}$ the Green's function of the standard Laplacian
$\Delta_0$ on the same $p$-dimensional Riemannian manifold $M_p$. If $f$ is smooth at $z = z_0$, then 
\begin{equation}\label{eq:C1}
  \lim_{z \rightarrow z_0} \frac{G_{f, z_0} (z)}{G_{0, z_0} (z)} = e^{- f
  (z_0)} .
\end{equation}
\end{lemma}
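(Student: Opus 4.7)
The strategy is to extract the leading singular behaviour of both Green's functions near $z_0$ and show that it depends on $f(z_0)$ only through the source normalization, while the first-order term $-\nabla f\cdot\nabla$ in $\Delta_f$ contributes only subleading corrections. Since $\Delta_f$ and $\Delta_0$ share the same principal symbol at $z_0$ and $f$ is smooth there, the only essential difference between the two Green's functions near the pole comes from the factor $\ee^{f}$ in the reference measure that defines the delta source.

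First I would pass to geodesic normal coordinates centered at $z_0$, in which $\sqrt{g(z_0)}=1$. Writing the defining relations of the two Green's functions as distributions against the ordinary Lebesgue measure $\dd^{p}z$, the measure factors $\sqrt{g}$ and $\ee^{f}\sqrt{g}$ evaluated at $z_0$ produce
\begin{equation*}
	\Delta_0 G_{0,z_0}=\delta_{z_0},\qquad \Delta_f G_{f,z_0}=\ee^{-f(z_0)}\,\delta_{z_0}.
\end{equation*}
Consequently the combination
\begin{equation*}
	H(z):=G_{f,z_0}(z)-\ee^{-f(z_0)}\,G_{0,z_0}(z)
\end{equation*}
has no delta source at $z_0$, and using $\Delta_f=\Delta_0-\nabla f\cdot\nabla$ one checks
\begin{equation*}
	\Delta_f H=\ee^{-f(z_0)}\,\nabla f\cdot\nabla G_{0,z_0}
\end{equation*}
in the distributional sense. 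Since $|\nabla f|$ is bounded near $z_0$ and $|\nabla G_{0,z_0}(z)|=O(r^{-(p-1)})$ with $r:=\di(z,z_0)$, the right-hand side is locally integrable with blow-up at most $O(r^{-(p-1)})$.

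The remaining step is to show $H(z)=o(G_{0,z_0}(z))$ as $z\to z_0$. This is a local elliptic estimate: inverting an operator with Euclidean principal symbol at $z_0$ against an inhomogeneity of order $r^{-(p-1)}$ produces a solution of size $O(r^{-(p-3)})$ (with a logarithmic correction in the borderline case $p=3$), strictly less singular than $G_{0,z_0}(z)\sim c_p\,r^{-(p-2)}$ with $c_p=1/((p-2)V(\mathbb{S}^{p-1}))$. Rigorously this can be obtained by convolving with the Euclidean fundamental solution in the normal chart, or equivalently by constructing a Hadamard-type parametrix for $\Delta_f$. Dividing by $G_{0,z_0}$ then gives $G_{f,z_0}/G_{0,z_0}=\ee^{-f(z_0)}+H/G_{0,z_0}\to\ee^{-f(z_0)}$.

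The main obstacle is the last step, namely controlling $H$ rigorously. The bookkeeping is routine once a parametrix is in hand, but one also has to verify that the global ambiguity in defining Green's functions on a compact or singular total space -- e.g.\ the freedom to add harmonic functions, or to subtract the zero-mode projector when $X$ is closed -- modifies both $G_{0,z_0}$ and $G_{f,z_0}$ only by locally bounded terms near $z_0$, so it does not disturb the leading singularity or the limiting ratio.
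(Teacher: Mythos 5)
Your proposal is correct, and it takes a genuinely different route from the paper. The paper's proof is a flux argument: it tests the defining identity of the Green's function against a function that is constant near $z_0$, integrates by parts twice outside a small ball, and obtains $\xi(z_0)=\xi(z_0)\,\ee^{f(z_0)}\lim_{\varepsilon\to0}\int_{\mathbb{S}^{p-1}(z_0,\varepsilon)}\sqrt{g}\,\eta\cdot\nabla G_{f,z_0}$; taking the ratio of the resulting flux normalizations for $\Delta_f$ and $\Delta_0$ produces the factor $\ee^{-f(z_0)}$, and the passage from the ratio of fluxes (angular averages of radial derivatives) to the ratio of pointwise values is done via normal coordinates and L'H\^opital. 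You instead encode the same weighted-measure normalization directly in the strength of the delta source, subtract $\ee^{-f(z_0)}G_{0,z_0}$ to cancel it, and control the remainder $H$ by an elliptic/parametrix estimate: $\Delta_f H=\ee^{-f(z_0)}\nabla f\cdot\nabla G_{0,z_0}=O(r^{1-p})$ with no delta, hence $H=O(r^{3-p})$ (logarithmic in the borderline $p=3$, which in any case does not occur here since $p=n+3\geqslant 4$), which is $o(G_{0,z_0})$. Your computation of $\Delta_f H$ and the integrability/convolution bookkeeping check out. What each approach buys: yours is the standard Hadamard-type comparison of fundamental solutions, is entirely local in a normal chart, and yields a quantitative rate $G_{f,z_0}/G_{0,z_0}=\ee^{-f(z_0)}+O(r)$; the paper's avoids constructing a parametrix but pays for it with the L'H\^opital step converting fluxes to values, which is itself a small elliptic-regularity debt. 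Both arguments leave a comparable amount of standard bookkeeping implicit, and you correctly flag the remaining items (the parametrix estimate and the harmless ambiguity of the Green's function by locally bounded harmonic additions or zero-mode subtraction), so there is no genuine gap beyond what the paper also leaves unproved.
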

\noindent An intuitive understanding of this result can be obtained by noticing that the dominant local behavior of the Green's function close to the pole is not affected by the weight, since the weight only adds to the Laplacian a term with a single derivative. For a smooth weight, this term is subdominant at short distances to the two-derivative terms already present in the Laplacian. Thus, only the pointwise value of $f$ is important, but it does not alter the power-law behavior of the Green's function.
For a similar reason, on a smooth Riemannian manifold the local behavior of the Green's function of the Laplacian close to the source is not sensitive to the details of the metric, thus behaving as in $\mathbb{R}^n$. More precisely, we have the following
\begin{lemma}\label{lemma:Rn}
	Call $G_{0, z_0}$  the Green's function of the Laplacian on a smooth $p$-dimensional Riemannian manifold $M_p$, centered at $z = z_0$, then 
	\begin{equation}\label{eq:lemmaRn}
		 \lim_{z\to z_0}|z-z_0|^{p-2}G_{0, z_0}(z)  = - \frac{1}{4} \pi^{- \frac{p}{2}} \Gamma \left( \frac{p}{2} - 1 \right).
	\end{equation} 
	Equivalently, $G_{0, z_0}(z)$ approaches the Green's function in $\mathbb{R}^p$ as $z\to z_0$.
\end{lemma}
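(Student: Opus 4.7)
The claim says the singularity of the Riemannian Green's function at a smooth interior point $z_0$ is dictated by the local Euclidean model. I would prove it either by reducing to flat space via a Hadamard parametrix, or---more cleanly---by invoking the short-time asymptotic expansion of the heat kernel and integrating in $t$.

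\textbf{Main steps.} First I would pass to Riemann normal coordinates $y$ at $z_0$, in which $g_{ij}(y)=\delta_{ij}+O(|y|^2)$, $\sqrt{\det g}(y)=1+O(|y|^2)$, and $|z-z_0|$ coincides with the geodesic distance $d(z,z_0)=|y|$. Locally, the Green's function admits a representation as a time integral of the heat kernel, $G_{0,z_0}(z)=-\int_0^\infty p_t(z,z_0)\,\dd t$ plus a smooth correction (the sign is fixed by the paper's convention $\Delta_0 G_{0,z_0}=-\delta_{z_0}$, since $\Delta_0=-\nabla^2$ is the positive Laplacian). Then I would invoke the classical Minakshisundaram--Pleijel expansion
\begin{equation*}
p_t(z,z_0) \;=\; \frac{\ee^{-d(z,z_0)^2/(4t)}}{(4\pi t)^{p/2}}\Bigl(u_0(z,z_0)+t\,u_1(z,z_0)+O(t^2)\Bigr),\qquad u_0(z_0,z_0)=1,
\end{equation*}
which is valid uniformly in a compact neighborhood of $z_0$. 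The elementary identity $\int_0^\infty t^{-p/2}\ee^{-a/t}\,\dd t = a^{1-p/2}\Gamma(p/2-1)$, applied with $a=d(z,z_0)^2/4$, turns the leading term into $-\frac{\Gamma(p/2-1)}{4\pi^{p/2}}\,d(z,z_0)^{-(p-2)}$; each extra factor of $t$ contributes a singularity of order $d^{-(p-4)}$ or softer (with possible logarithmic corrections in even dimensions). Multiplying by $|z-z_0|^{p-2}$ and sending $z\to z_0$, only the leading term survives and yields the claimed coefficient $-\tfrac{1}{4}\pi^{-p/2}\Gamma(p/2-1)$.

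\textbf{Main obstacle and alternative.} The bulk of the work lies in justifying the heat-kernel expansion and the interchange of the $t$-integral with the limit $z\to z_0$. A self-contained alternative, avoiding heat-kernel machinery, is the Hadamard parametrix: write $G_{0,z_0}=\chi\,\Gamma_p+R$ with $\chi$ a smooth cutoff around $z_0$ and $\Gamma_p(y)=-\tfrac{1}{4}\pi^{-p/2}\Gamma(p/2-1)\,|y|^{-(p-2)}$ the Euclidean fundamental solution, and split $\Delta_0=\Delta_{\mathrm{Eucl}}+E$ so that the top-order coefficients of $E$ are $O(|y|^2)$. In low dimensions one has $\Delta_0 R\in L^s_{\mathrm{loc}}$ with $s>p/2$, and standard interior elliptic regularity plus Morrey embedding give $R\in C^{0,\alpha}_{\mathrm{loc}}$, bounded near $z_0$. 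In dimensions $p\geq 4$ the hard part will be iterating the construction: add correction terms $U_k(y)|y|^{-(p-2-2k)}$ with $U_k$ determined by Hadamard's transport equations, continuing until $\Delta_0 R$ becomes bounded and hence $R$ continuous at $z_0$. Either way one concludes $|z-z_0|^{p-2}R(z)\to 0$, and the lemma follows.
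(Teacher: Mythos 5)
Your proposal is correct in substance but follows a genuinely different route from the paper. The paper's proof is a direct continuation of its proof of Lemma~\ref{lemma:ef0}: it takes the distributional identity defining the Green's function, tests it against a function identically equal to $1$ near $z_0$, and obtains the flux normalization $1=\lim_{\varepsilon\to0}\int_{\mathbb{S}^{p-1}(z_0,\varepsilon)}\sqrt{g}\,\eta\cdot\nabla G_{0,z_0}$; it then passes to Riemann normal coordinates, writes the flux as $\mathrm{Vol}(\mathbb{S}^{p-1})\,r^{p-1}\partial_r G_{0,z_0}$, and applies L'H\^opital's rule to convert the statement about $\partial_r G_{0,z_0}/r^{1-p}$ into the claimed limit of $G_{0,z_0}/r^{2-p}$. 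You instead extract the singularity from the Minakshisundaram--Pleijel short-time expansion of the heat kernel (or from a Hadamard parametrix plus elliptic regularity), and your Gamma-function computation does reproduce the stated constant, including the correct treatment of the subleading terms and the logarithmic corrections in even dimensions. The trade-off: your route imports heavier machinery (heat-kernel asymptotics, or transport equations plus Morrey/Schauder estimates) but is fully rigorous and yields the complete local expansion of $G_{0,z_0}$, not just the leading coefficient; the paper's route is shorter and elementary, reusing the flux argument already set up for Lemma~\ref{lemma:ef0}, but it implicitly assumes that $G_{0,z_0}$ is asymptotically radial near the pole and that the pointwise limit $\lim_{r\to0}\partial_r G_{0,z_0}/r^{1-p}$ exists (the flux identity only controls an angular average), which is exactly the kind of regularity your parametrix argument would supply. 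Two small points to watch in your write-up: on a compact manifold the naive representation $G=-\int_0^\infty p_t\,\dd t$ diverges at large $t$ and needs the subtraction of $1/V$ (your ``smooth correction''), and the identity $|z-z_0|=d(z,z_0)$ should be stated as holding in normal coordinates, consistently with how the paper reads $|z-z_0|$.
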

\noindent See App.~\ref{app:green} for a proof of these Lemmas.

Applying the two lemmas to $\Delta_f$ and $\Delta_0$, and combining them with \eqref{eq:cj} and \eqref{eq:Gf} we finally get
\begin{equation}\label{eq:scal-weyl}
\begin{split}
	\ee^{-f(y_0)} &= \lim_{r\to 0 }\lim_{y\to y_0} \frac{\sum_k - \frac{\psi_k(y_0)}{V_f(X)}\frac{\ee^{- m_k r}}{4 \pi r} \psi_k (y)}{- \frac{1}{4} \pi^{- \frac{3+n}{2}} \Gamma \left( \frac{n+1}{2} \right)|z-z_0|^{-n-1}} \\
	&=\frac{\pi^{ \frac{1+n}{2}}}{V_f(X)\Gamma \left( \frac{n+1}{2}  \right)} \lim_{r\to 0} r^{n}\sum_k  \psi^2_k(y_0)\ee^{- m_k r}\,,
\end{split}	
\end{equation}
where in the first line we have split the limit for $z\to z_0$  as a limit for $y\to y_0$ followed by a limit for $r=|x-x_0|\to 0$. Notice that \eqref{eq:scal-weyl} agrees with \eqref{eq:grav-weyl} upon expanding the coefficients. 



\section{Gravitational arguments for the Weyl law} 
\label{sec:grav-weyl}

\subsection{Average argument} 
\label{sub:ave}
One first strategy to use (\ref{eq:grav-weyl}) is to integrate both sides over all of $X$.
Physically, this corresponds to taking the four-dimensional particles to be mass distributions in the internal space instead of $D$-dimensional localized particles, a procedure sometimes called \emph{smearing}.
Equivalently for our purposes, a distribution in the internal space can also be thought of as a classical probability distribution assigned to a genuinely localized particle by a four-dimensional observer who is not able to probe the internal scales.\footnote{This point of view has been introduced in \cite{deluca-deponti-mondino-t-entropy}, which showed that, by associating an appropriate notion of entropy to such a probability distribution, a four-dimensional observer can reconstruct the internal geometry completely from thermodynamical quantities. Specifically, assigning a concavity property for this entropy is equivalent to assigning the complete set of internal Einstein equations.}
A four-dimensional observer maximally ignorant about the internal position of the particle would assign to it a uniform probability distribution in the internal space.
In both cases, this corresponds to integrating (\ref{eq:grav-weyl}) in the whole $X$ with a uniform probability distribution, which in curved space is a constant times $ \sqrt{g} $.

Taking $\int \dd^n y_0 \ee^{f(y_0)}  \sqrt{g_n}$ on both sides of (\ref{eq:scal-weyl}) and recalling the normalization (\ref{eq:psi-norm}), we obtain
\begin{equation}\label{eq:int-grav-weyl}
	\lim_{r\to 0} r^n\sum_{k=0}^\infty \ee^{-m_k r} \quad = \quad  \frac{n!\,\omega_n V(X)}{(2\pi)^n}\,,
\end{equation}
where we allowed the exchange of the limit and integral, a point to which we will come back at the end of this section.
Notice that thanks to the integration, we have gotten rid of the eigenfunctions and, simultaneously, all the dependence on the weight $f$ disappeared from \eqref{eq:int-grav-weyl}. Because of this, any result on the asymptotic behavior of the eigenvalues $m_k^2$ will be the same in the warped and unwarped case. 

We can now use a classic result on Laplace transforms due to Karamata and streamlined in \cite[XIII.5, Th.~2]{feller-vol2}. Applying it to the point measure on the eigenvalues and taking $m_k= \sqrt{\lambda_k}$ we obtain
\begin{equation}
	\lim_{\lambda \to \infty} \frac{N(\lambda)}{\lambda^{n/2}}= \frac{\omega_n V(X)}{(2\pi)^n}\,,
\end{equation}
which reproduces (\ref{eq:weyl-count}) with the coefficient $a$ given in (\ref{eq:weyl}).

For a rough idea of how this last step works, let us assume $m_k = \sqrt{\lambda_k} \sim \alpha k^{1/\nu}$ with two unknown constants $\alpha$ and $\nu$. When $r \ll 1$ the sum can be well approximated by an integral, using the general formula
\begin{equation}\label{eq:sum-int}
	\epsilon \sum_{k=0}^\infty f(k \epsilon) \underset{\epsilon\to 0}{\sim} \int_0^\infty \dd p f(p)\,.
\end{equation}
Taking $\epsilon=(\alpha r)^{\nu}$, we obtain
\begin{equation}\label{eq:weyl-grav}
	\sum_k \ee^{-m_k r} \underset{r\to 0}{\sim} \sum_k \ee^{-\alpha k^{1/\nu} r} \sim \frac1{(\alpha r)^{\nu}}\int_0^\infty \dd p \, \ee^{-p^{1/\nu}}=\frac{\Gamma(\nu+1)}{(\alpha r)^{\nu}} \,.
\end{equation}
Comparing with (\ref{eq:int-grav-weyl}), the power of $r$ determines $\nu=n$. The overall coefficient sets $\alpha$ to the $a$ in (\ref{eq:weyl}).

Summarizing, we have shown how to prove the Weyl law under the hypothesis that the exchange of the limit and integration that lead to \eqref{eq:int-grav-weyl} can be rigorously justified. In general, this requires to control the local geometry, in particular close to possible singularities. We will come back to this problem by providing an alternative proof of the Weyl law for D-brane type of singularities in Sec.~\ref{sec:weyl-rev}. In the next sections, we will discuss instead a local derivation of the Weyl law, which is thus not affected by these issues.
\subsection{Quantum ergodicity in the unwarped case} 
\label{sub:erg}

In our derivation in the previous subsection, integrating over the internal space was useful in getting rid of the eigenfunctions in (\ref{eq:grav-weyl}). For a large class of spaces $X$, an alternative strategy exists. 

A space $X$ is said to be (classically) \emph{ergodic} if, roughly said, for a generic choice of initial conditions, the geodesic motion of a particle covers uniformly all of phase space. Usually this is made precise as follows. Let $S^*X$ be the sphere bundle inside $T^*X$: it consists of all the choices of pairs $(x,\dot x)$, where $x\in X$ and $\dot x$ is the velocity vector (taken to have unit norm). It comes with a natural Liouville measure $\omega$.
A geodesic $\gamma$ parametrized by arc-length can be lifted canonically to the curve $(\gamma, \dot{\gamma})$ in $S^* X$. By definition, (classical) \emph{ergodicity} means
\begin{equation}\label{eq:erg}
	\lim_{T\to \infty}\frac1T \int_0^T f(\gamma(t), \dot\gamma(t)) \dd t= \int_{S^* X} f \dd \omega\,
\end{equation}
for almost all geodesics $\gamma$ and for any continuous $f$ on $S^* X$. In particular, one can take $f$ to be a function of the variable $x$ alone, i.e.~the pullback of a function on $X$. 

Not every $X$ has this property: when there are many isometries, geodesic motion might be integrable, and in that case (\ref{eq:erg}) will fail. As an extreme example, consider $\mathbb{S}^2$, with its usual coordinates $\theta$, $\phi$; there are three Killing vectors, and the associated conserved $L_i$ are the components of the angular momentum. A geodesic is a great circle: a particle will remain on it for ever, and not explore all of $\mathbb{S}^2$. In particular, if $L_z/L = \cos \theta_0$, the particle will always remain in a region around the equator:
\begin{equation}\label{eq:band-S2}
	\left[\frac\pi 2- \theta_0, \frac\pi 2+ \theta_0 \right]\,.
\end{equation}
So typical geodesics don't sample all of the $\mathbb{S}^2$, and (\ref{eq:erg}) fails for a non-constant function $f(\theta)$.

It is natural to wonder if an analogue of (\ref{eq:erg}) exists for a quantum particle on $X$. Indeed it turns out \cite{schnirelman,colindeverdiere,zelditch-surfaces} that if $X$ is ergodic, then it also has the \emph{quantum ergodicity} property, i.e.~the eigenfunctions $\psi_k$ of the Laplacian have the property that the probability to find the particle in a Borel subset $B\subset X$ is proportional to the measure of $B$:
\begin{equation}\label{eq:QE}
	\lim_{\substack{k\to \infty\\ k\notin e}}\frac{\int_B \sqrt{g}\psi_k^2}{ \int_{X} \sqrt{g}\psi_k^2} \ =\ \frac{V(B)}{V(X)}\,.
\end{equation}
The limit should be taken by possibly excluding a set $e\subset \mathbb{N}$ of $k$'s of measure zero: namely, $\lim_{k\to \infty}(\#(e\cap \{1,\ldots,k\}))/k =0 $. If $e=\emptyset$, then one says there is quantum \emph{unique} ergodicity. There has been extensive research over the years over this topic; the excluded set may not be empty, in which case the $\psi_{k\in e}$ are peaked along the non-generic, non-ergodic classical trajectories, called \emph{scars}. (See for example \cite{tao-scars} for an introduction and \cite[Sec.~7.5]{reichl} for some concrete examples with billiards.) 

Concretely, (\ref{eq:QE}) works because at large $k$ the $\psi_k$ oscillate wildly around a constant value; no matter how small $B$ is, at large enough $k$ taking the average of $\psi_k^2$ over it will suppress the oscillating part and give the constant. 

Just like its classical counterpart, QE fails in models with too many symmetries: taking again the round $\mathbb{S}^2$, the $\psi_k$ can be chosen to be the spherical harmonics $Y_{l,m}$ (the functions obtained by restricting polynomials on $\mathbb{R}^3$ to $\mathbb{S}^2$); recall that $\lambda=l(l+1)$, $|m|\leqslant l$. These will display the typical WKB behavior, with a wildly oscillating behavior in the classically allowed band (\ref{eq:band-S2}), and exponential decay outside it; again here $m/l= L_z/L= \cos \theta_0$. So even at large $\lambda$ the $\psi_k$ are not approximately constant. Of course if we sum over $m$ we do get a constant, $\sum_{m=-l}^l Y^2_{l,m}=(2l+1)/4\pi$. (The spectrum is invariant under the $L_i$, so the projector over a given level of $\Delta$ must be rotationally invariant.) Actually, even for this extremely symmetric case, if we select a random orthonormal basis of eigenfunctions rather than the $Y_{l,m}$, then quantum ergodicity does hold \cite{zelditsch-S2}. Essentially the random choice has the same effect as the sum over all $m$. More general results hold for random orthonormal bases (not necessarily of eigenfunctions) \cite{zelditch-QE-highdimension,maples}.

For the problem at hand, when ergodicity holds we can integrate (\ref{eq:grav-weyl}) over a small Borel subset $B$ rather than over all of $X$. The QE property was established for eigenfunctions of the ordinary unweighted Laplacian, so we still need to work with $f=0$. The $V(B)$ factors cancel out, and we reduce to (\ref{eq:int-grav-weyl}) again, from which point our previous computation applies. 

This argument for Weyl's law has the advantage that it doesn't require integrating over the whole $X$. This is physically more meaningful, because it corresponds to particles that are localized in the internal space; and it avoids possible issues when $X$ has singularities.\footnote{In string theory, solutions with singularities have $A\neq 0$, for which we actually need to use the argument in the next subsection.}
The disadvantage is that it only applies to ergodic spaces. While many of the best known solutions have lots of symmetries, one expects the ergodic case to be the generic one.


\subsection{Weighted quantum ergodicity} 
\label{sub:wqe}

We will now consider the case of warped compactifications, i.e.~$f\neq 0$. As we have seen in the averaged derivation in Sec.~\ref{sub:ave}, Weyl's law contains the ordinary volume, while (\ref{eq:grav-weyl}) contains the weighted volume (\ref{eq:wV}), ultimately because of the relation (\ref{eq:GG}) among the Planck masses. When integrating (\ref{eq:grav-weyl}) over the internal space, this cancels, by virtue of the left hand side depending on $\ee^{f(y_0)}$, leaving just an ordinary volume. 
Trying to avoid the internal integral, however, a puzzle arises here. Quantum ergodicity would suggest that $\psi_k(y_0)^2\to 1$, but this would not cancel the weighted volume factor on the right hand side of (\ref{eq:grav-weyl}) nor transform it into an ordinary volume. What reconciles the Weyl law with the local behavior of (\ref{eq:grav-weyl})?

One clue to the solution is that the left hand side of (\ref{eq:grav-weyl}) depends on $y_0$, while the right hand side does not.  This implies that the wavefunctions cannot possibly oscillate around a constant at large $k$; rather, 
\begin{equation}\label{eq:wqe}
	\psi_k^2(y_0) \ \text{oscillate around} \ \propto\ee^{-f(y_0)}\,.
\end{equation}

Notice that this is the \emph{inverse} of the function appearing in the definition of the weighted volume (\ref{eq:wV}). When we integrate \eqref{eq:wqe} over a Borel subset $B\subset X$ with that measure, we thus end up with an ordinary $V(B)$ on the right-hand side:
\begin{equation}\label{eq:WQE-norm}
	\lim_{\substack{k\to \infty\\ k\notin e}}\int_B \sqrt{g}\ee^f\psi_k^2\ = \  V_f(X)\frac{V(B)}{V(X)} \,,
\end{equation}
where we fixed the normalization constant by consistency with the case $B =X$ in our normalization (\ref{eq:psi-norm}).
For a more general normalization, we have
\begin{equation}\label{eq:WQE}
	\lim_{\substack{k\to \infty\\ k\notin e}}\frac{\int_B \sqrt{g}\ee^f\psi_k^2}{\int_{X} \sqrt{g}\ee^f\psi_k^2}\ = \  \frac{V(B)}{V(X)}\quad \text{for every Borel subset $B\subset X$} \,.
\end{equation}
It is natural to call (\ref{eq:WQE}) the \emph{weighted quantum ergodicity} (WQE) property.  It might have applications independent of the Weyl law. While the QE in the previous subsection is proven in the classically ergodic case, the corresponding statement for (\ref{eq:WQE}) is at this point a conjecture. We give in App.~\ref{sub:ex-erg} a couple of simple one-dimensional examples where it can be checked rather explicitly. 

Integrating  (\ref{eq:grav-weyl}) over a small ball where  the particles are present, and using  (\ref{eq:WQE-norm}) gives the Weyl law  (\ref{eq:weyl}). Let us stress that  the WQE  is key in order to cure the naive discrepancy between the gravitational expectation \eqref{eq:grav-weyl}  and the Weyl law (\ref{eq:weyl}), as in the former the weighted volume $V_f(X)$ appears  while in the latter only the classical volume $V(X)$ shows up.  We give below the precise statement, followed by  a proof.

\begin{theorem}\label{Th:WQEToWeyl}
Let $(X,g, e^f)$ be a weighted Riemannian $n$-dimensional manifold, possibly with isolated singularities. Assume that 
\begin{enumerate}
\item[(i)] The Riemannian volume $V(X)$ and the weighted volume $V_f(X)$ are finite;
\item[(ii)]  The spectrum of the weighted Laplacian $\Delta_f$ is discrete and satisfies 
\begin{equation}\label{eq:limsuprto0}
\limsup_{r\to0} r^n\sum_{k=0}^\infty \ee^{-\sqrt{\lambda_k} r}<\infty\,;
\end{equation}
\item[(iii)]   The weighted quantum ergodicity property \eqref{eq:WQE} holds, with the excluded set $e$ of finite cardinality.
\end{enumerate}
 Then the eigenvalues of the weighted Laplacian $\Delta_f$ satisfy the (regular) Weyl law  \eqref{eq:weyl}. 
\end{theorem}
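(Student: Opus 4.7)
The plan is to invoke Karamata's Tauberian theorem (as applied in Sec.~\ref{sub:ave}) on the trace-like quantity $Z(r) := \sum_{k\ge 0}\ee^{-\sqrt{\lambda_k}\,r}$: once I show
\[ r^n Z(r)\ \longrightarrow\ \frac{n!\,\omega_n V(X)}{(2\pi)^n} \quad \text{as } r\to 0^+, \]
the Weyl law follows in the counting form (\ref{eq:weyl-count}). The starting point is the pointwise asymptotic (\ref{eq:scal-weyl}) rigorously derived in Sec.~\ref{sub:grav-sc}: setting the non-negative function
\[ T_r(y) \ :=\ r^n\, \ee^{f(y)} \sum_k \psi_k^2(y)\,\ee^{-\sqrt{\lambda_k}\,r}, \]
that identity reads $T_r(y)\to C := n!\,\omega_n V_f(X)/(2\pi)^n$ for every smooth $y$, hence almost everywhere by the isolated-singularity assumption in (i). The normalization (\ref{eq:psi-norm}) yields the bookkeeping identity $V_f(X)\,r^n Z(r) = \int_X \sqrt{g}\,T_r\,\dd y$, whose right-hand side stays bounded as $r\to 0$ by (ii).

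Fatou's lemma applied to $T_r\ge 0$ gives at once the lower bound $\liminf_{r\to 0} r^n Z(r) \ge V(X)\,C/V_f(X) = n!\,\omega_n V(X)/(2\pi)^n$. The matching upper bound $L := \limsup_{r\to 0} r^n Z(r) \le V(X)\,C/V_f(X)$ is the technical heart: (ii) controls only the integral of $T_r$, not its pointwise size, so dominated convergence is not directly available. I propose the following Egorov--WQE sandwich. Fix a subsequence $r_j\downarrow 0$ with $r_j^n Z(r_j)\to L$ (finite by (ii)); by Egorov's theorem on the finite-measure space $(X,\sqrt{g}\,\dd y)$, for each $\delta > 0$ there is a measurable subset $A_\delta\subset X$ with $V(A_\delta)<\delta$ such that $T_{r_j}\to C$ uniformly on $X\setminus A_\delta$, whence
\[ \int_{X\setminus A_\delta}\sqrt{g}\,T_{r_j}\,\dd y\ \longrightarrow\ C\,V(X\setminus A_\delta). \]
For the "bad" part, Tonelli gives
\[ \int_{A_\delta}\sqrt{g}\,T_{r_j}\,\dd y\ =\ r_j^n\sum_k \alpha_k(A_\delta)\,\ee^{-\sqrt{\lambda_k}\,r_j}, \qquad \alpha_k(A_\delta):=\int_{A_\delta}\sqrt{g}\,\ee^f \psi_k^2\,\dd y. \]
By WQE (iii) with finite exceptional set, for any $\eta > 0$ only finitely many indices $k$ violate $\alpha_k(A_\delta)\le V_f(X)(\delta+\eta)/V(X)$; those finitely many outliers contribute $O(r_j^n)\to 0$, while the well-behaved tail is bounded by $V_f(X)(\delta+\eta)\,r_j^n Z(r_j)/V(X)$, with limit $V_f(X)(\delta+\eta)L/V(X)$. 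Combining the two parts and letting first $\eta\to 0$ and then $\delta\to 0$ produces $V_f(X)\,L\le V(X)\,C$, which together with Fatou pins the limit and completes the proof via Karamata.

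The main obstacle is precisely this upper bound, which no single hypothesis secures on its own: the pointwise convergence from (\ref{eq:scal-weyl}) allows no a priori $L^\infty$ domination, assumption (ii) controls only integrated quantities, and WQE (iii) only speaks of individual eigenfunctions. The Egorov--WQE sandwich above genuinely couples all three --- Egorov to split $X$ into a uniformly good piece and a small bad piece, WQE to cap the eigenfunction mass on the bad piece, and (ii) to convert this pointwise cap into the integrated estimate needed to match $\limsup$ with $\liminf$. The finiteness of the excluded set $e$ in (iii) is used essentially so that the outlier contribution is $O(r_j^n)\to 0$ rather than uncontrolled; relaxing this to a density-one condition would require a more delicate sorting of the exceptional frequencies.
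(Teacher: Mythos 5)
Your proof is correct, and it takes a genuinely different route from the paper's. The paper localizes: it fixes a regular point $y_0$ and a small ball $B$ around it, integrates \eqref{eq:scal-weyl} over $B$ to obtain \eqref{eq:grav-weyl-Pf}, and then invokes WQE on that single ball to replace $\int_B\sqrt{g}\,\ee^f\psi_k^2$ by $V_f(X)V(B)/V(X)$ for all $k\geqslant N$, the head $k<N$ contributing $O(r^n)$; this collapses the weighted series to $r^n\sum_k\ee^{-m_kr}$ and lands on \eqref{eq:int-grav-weyl}, after which Karamata finishes as in Sec.~\ref{sub:ave}. You instead globalize, working with your $T_r$ through the identity $\int_X\sqrt{g}\,T_r=V_f(X)\,r^nZ(r)$: Fatou gives the lower bound for free, and the upper bound comes from Egorov applied along a subsequence realizing the $\limsup$, with WQE (plus the finiteness of $e$) capping the eigenfunction mass on the small Egorov set. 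Both arguments consume the hypotheses in the same way --- \eqref{eq:limsuprto0} for tail control, finite cardinality of $e$ to dispose of finitely many outliers --- but your route buys a complete justification of the limit--integral exchange: in the paper that exchange is precisely the content of the first step \eqref{eq:grav-weyl-Pf}, which is supported only by the $L^1$-convergence of the series at fixed $r$, legitimizing the swap of the sum with $\int_B$ but not of the $r\to0$ limit with $\int_B$. Your Fatou/Egorov sandwich closes that small gap. The price is that you need WQE for arbitrary Borel sets (the Egorov sets $A_\delta$), whereas the paper's localized argument needs it only for a shrinking family of balls around one regular point, which is the weaker hypothesis recorded in the Remark following Th.~\ref{Th:WQEToWeyl}.
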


\begin{proof}
 Fix  $y_0\in X$,   a regular point for both the Riemannian $n$-dimensional manifold $(X,g)$ and the weight $f$. Fix $\varepsilon>0$ and let  $B$ be a small metric ball of radius $r$ centred at $y_0$. To keep notation short, we denote $m_k:=\sqrt{\lambda_k}$.
Integrating \eqref{eq:scal-weyl} (or, equivalently, \eqref{eq:grav-weyl}) over $B$ with respect to the Riemannian volume  $\sqrt{g}$, and using that the series $ \sum_{k=0}^\infty \ee^{-m_k r} \psi_k^2$ converges in $L^1(M, e^f {\rm vol}_g)$, we get that
\begin{equation}\label{eq:grav-weyl-Pf}
    \left| \frac{n!\omega_n V(X) }{(2\pi)^n} - \frac{V(X)}{ V_f(X) V(B)} r^n \sum_{k=0}^\infty  \ee^{-m_k r} \int_B \sqrt{g} e^{f} \psi_k^2   \right| \leqslant \frac{\varepsilon}{4},
\end{equation}
for all $r\in (0,  r_0)$, where $r_0= r_0(\varepsilon)>0$ is small enough.
Notice that the weighted quantum ergodicity property \eqref{eq:WQE} together with the  normalization (\ref{eq:psi-norm}) gives (\ref{eq:WQE-norm}). Now, using (\ref{eq:WQE-norm})  under the additional assumption that the excluded set $e$ has finite cardinality, we infer that there exists $N=N(\varepsilon')>0$ such that
\begin{equation}\label{eq:WQE-norm-Pf}
\left|  \  V_f(X)\frac{V(B)}{V(X)} -  \int_B \sqrt{g}\ee^f\psi_k^2\right| \leqslant \varepsilon', \quad \text{for all } k\geqslant N.
\end{equation}
Notice also that
\begin{equation}\label{eq:finiteSum1}
 \frac{V(X)}{ V_f(X) V(B)}  r^n  \sum_{k=0}^N  \ee^{-m_k r} \int_B \sqrt{g} e^{f} \psi_k^2   \leqslant \frac{\varepsilon}{4},  \quad  \text{ for all }r\in (0,  r_0(\varepsilon)).
\end{equation} 
Plugging \eqref{eq:WQE-norm-Pf}--\eqref{eq:finiteSum1}  into  \eqref{eq:grav-weyl-Pf} and recalling \eqref{eq:limsuprto0}, we obtain
\begin{equation}\label{eq:grav-weyl-Pf2}
    \left| \frac{n!\omega_n V(X) }{(2\pi)^n} -   r^n \sum_{k=N}^\infty  \ee^{-m_k r}    \right| \leqslant \frac{\varepsilon}{2}, \quad \text{for all }r\in (0,  r_0(\varepsilon)).
\end{equation}
Notice also that
\begin{equation}\label{eq:finiteSum2}
r^n \sum_{k=0}^N  \ee^{-m_k r}   \leqslant \frac{\varepsilon}{2},   \quad \text{ for all }r\in (0,  r_0(\varepsilon)).
\end{equation} 
Splitting  the series in \eqref{eq:grav-weyl-Pf} as the finite sum up to $N$ and the infinite series for $k\geqslant N$, and using \eqref{eq:grav-weyl-Pf2}--\eqref{eq:finiteSum2} yields
\begin{equation}\label{eq:grav-weyl-Pf3}
    \left| \frac{n!\omega_n V(X)}{(2\pi)^n} -  r^n \sum_{k=0}^\infty  \ee^{-m_k r}    \right| \leqslant \varepsilon,  \quad  \text{ for all }r\in (0,  r_0(\varepsilon)).
\end{equation}
Since $ \varepsilon>0$ is arbitrarily small,  \eqref{eq:grav-weyl-Pf3} leads us back to (\ref{eq:int-grav-weyl}), which we have already shown to imply the Weyl law \eqref{eq:weyl}. 
\end{proof}

\begin{remark}
In Th.\;\ref{Th:WQEToWeyl}, in place of (iii) it is sufficient to ask the following (a priori weaker) variant of the  WQE: there exists a regular point $y_0\in X$ such that \eqref{eq:WQE} holds for a sequence of metric balls $B=B_{r_j}(y_0)$ around $y_0$ whose radii $r_j$ converge to $0$. 
\end{remark}

\begin{remark}
It is natural to expect that the finite cardinality condition in Th.\;\ref{Th:WQEToWeyl} (iii) can be dropped. The contribution of some of the $e$ can be large if $B$ intersects one of the scars (i.e.~the classical closed trajectories; recall the discussion below (\ref{eq:QE})). But the number of scars intersecting a single $B$ is unlikely to be large.
\end{remark}

For another heuristic argument leading to the Weyl law, at least in the smooth case, we can use a standard trick and map the original eigenvalue problem $\Delta_f \psi_k = \lambda_k \psi_k$ for (\ref{eq:w-Lap}) to a Schr\"odinger equation:
\begin{equation}\label{eq:schroedinger}
	 \left(\Delta_0 + U\right)\tilde\psi_k = \lambda_k \tilde\psi_k \, ,\qquad U= -\ee^{-f/2}\Delta_0 \,\ee^{f/2} \, ,\qquad \psi_k=\ee^{-f/2}\tilde\psi_k\,,
\end{equation}
where $\Delta_0$ is now the ordinary unweighted Laplace--Beltrami operator on $X$. If $|U|$ is bounded, the $\lambda_k$ at large $k$ should be much larger than $\mathrm{sup}_X U$. In other words, at large $k$ the $\tilde\psi_k$ should oscillate fast, so the kinetic term in (\ref{eq:schroedinger}) should dominate over $U$.\footnote{
Note that $U$ is in fact not always bounded for the singularities that appear in string theory. In the usual coordinates (see e.g.~\cite[Sec.~3]{deluca-deponti-mondino-t}), for D$p$-branes $U\sim-1/16(p-3)(p-7)^3r^{5-p}$, which as $r\to 0$ diverges for $p=6$. For O$p$-planes, $U\sim - 1/16 (p-3)(p-7) (r-r_0)^{-3}$, whose $r\to r_0$ limit is $+\infty$ for $p=4,5,6$ and $-\infty$ for $p=8$. In addition, even in the cases where $U$ is bounded, the geometry itself is singular for various sources, and this heuristic argument would not apply directly since quantum ergodicity for $\Delta_0$ might not hold. We thank Zhenbin Yang for discussions on this point.} 
So the $\tilde\psi_k$ should be asymptotic at large $k$ to the eigenfunctions of $\Delta_0$. If ergodicity holds, we know by the previous subsection that the $\tilde \psi_k$ will oscillate around a constant. By the rescaling in (\ref{eq:schroedinger}), we can now conclude (\ref{eq:wqe}).

It should be noted that the WQE is not expected to hold for a space that has symmetries, just like its usual unweighted counterpart. These are typically the models where the spectrum can be solved explicitly. In App.~\ref{sub:ex-nerg} we will see an example where the WQE does not hold, but the spectrum is so explicit that the gravitational expectation (\ref{eq:grav-weyl}) can be checked directly.


\subsection{Possible application to gravity localization} 
\label{sub:loc}

The WQE property  (\ref{eq:WQE}) might also have an application to gravity localization. This is the idea that some warped product spacetimes might give rise to a gravitational potential that behaves   in the four-dimensional way $\propto 1/r$ over some range of distances, even if the ``internal'' space $X$ is non-compact. It was famously realized in the Randall--Sundrum II \cite{randall-sundrum2} and Karch--Randall models \cite{karch-randall}, for Minkowski and AdS spacetimes respectively, with string theory realizations explored for example in \cite{verlinde-RS2,chan-paul-verlinde,bachas-lavdas2}. 

The idea involves often a mix of two phenomena: i) a separation between the first spin-two KK mass and the rest of the tower, and ii) a suppression of the wavefunctions for the latter. The first can be analyzed mathematically using our work on estimates on the KK tower \cite{deluca-deponti-mondino-t,deluca-deponti-mondino-t-entropyproof,deluca-deponti-mondino-t-L2}. The second is harder, as there appears to be very little mathematical literature on eigenfunctions of the weighted Laplace operator, especially in the presence of singularities. 

Our result (\ref{eq:wqe}) about the eigenfunctions, albeit not fully rigorous, provides a partial remedy to this. If $\ee^f= \ee^{(D-2)A}$ is peaked in a region, the $\psi_k$ at large $k$ will be suppressed there. Ideally one would of course also want results about small $k$.

For example, in earlier work \cite[Sec.~4]{deluca-deponti-mondino-t-L2} we found examples of AdS warped products where eigenvalue separation is realized, estimating the $m_k$ using Cheeger constants. However, this result alone guaranteed a four-dimensional gravitational behavior only for distances larger than the cosmological scales; we found that lowering this scale would be possible only by using the aforementioned wavefunction suppression. 

Let us further restrict to the models of \cite[Sec.~4.3]{deluca-deponti-mondino-t-L2} with ${\mathcal N}=4$ supersymmetry, previously studied in this context by Bachas, Estes and Lavdas \cite{bachas-estes,bachas-lavdas2}. Here the non-compact internal space has a central ``bulb'' connected via two thin ``tubes'' to two non-compact ends. The function $\ee^f$ has a peak in the bulb, is small in the tubes, and grows again exponentially in the non-compact ends. So (\ref{eq:wqe}) leads to expect that the $\psi_k$ are localized on the tubes for large $k$. A numerical study suggests that this indeed happens. Further more general results on lower $k$ would be needed to really conclude that the scale of localization can be made much lower than the cosmological scale. But we find the results obtained here from WQE to be an encouraging step in that direction.



\section{Weyl law and RCD spaces}\label{sec:weyl-rev}
We discuss here the mathematical literature about the Weyl law for non-smooth spaces, with a particular focus on the $\mathsf{RCD}$ setting. We then provide a rigorous proof of the validity of the Weyl law for compact spaces with $Dp$-brane singularities, for $p=6,7,8$.

\subsection{Mathematical results in the non-smooth setting}
\label{sub:mathrev}
In the last decades, there has been a tremendous interest in the theory of curvature-dimension bounds on non-smooth spaces (see \cite{Amb} for a survey). Among the various conditions introduced and studied, a prominent role is played by the class of $\mathsf{RCD}(K,N)$ spaces, where the Weyl law was recently investigated. This class consists of metric measure spaces $(X,\di,\mm)$ with \emph{synthetic} Ricci curvature bounded below by $K\in \mathbb{R}$, and dimension bounded above by $N\in [1,\infty).$\footnote{The $\mathsf{RCD}(K,N)$ class can be also defined for $N=\infty$ and $N<0$. Weyl law in these more general settings has not been considered yet in the literature.} These spaces may possess singularities, but it was proved in \cite{mondino-naber} that the regular part covers the whole space, possibly up to a singular set of $\mm$-measure zero. Roughly, one defines the regular part as the set of points such that the metric measure space looks like the Euclidean space at an infinitesimal scale. More precisely, for every $n\in \mathbb{N}$ let us introduce the set $\mathcal{R}_n$ of points $x\in \mathrm{supp}(\mm)$ such that
\begin{equation*}
(X,r^{-1}\di, \mm^x_r,x)\xrightarrow{\text{pmGH}} (\mathbb{R}^n,|\cdot|,c_n\mathcal{H}^n,\mathrm{0}_n) \qquad \textrm{as} \ r\to 0^+\,.
\end{equation*}
Here $|\cdot|$ is the Euclidean distance, $\mathcal{H}^n$ is the Hausdorff measure (or in other words the standard Lebesgue volume measure), and $\mm^x_r$ is the rescaled measure defined as
$$\mm^x_r:=\left(\int_{B_r(x)}(1-r^{-1}\di(x,y))\dd \mm(y)\right)^{-1}\mm\,.$$
$c_n$ is the natural normalization constant and by $\textrm{pmGH}$ we are denoting the pointed measured Gromov--Hausdorff convergence (see \cite{mondino-naber} for all the relevant definitions). 
It is known thanks to \cite{brue-semola} that there exists an $n\in [1,N]$, called \emph{essential dimension} of the space $(X,\di,\mm)$, such that $\mm(X\setminus \mathcal{R}_n)=0$. Introducing the reduced regular set $\mathcal{R}^*_n$ defined as
$$\mathcal{R}^*_n:=\left\{x\in \mathcal{R}_n \ : \ \exists \lim_{r\to 0^+} \frac{\mm(B_r(x))}{w_nr^n}\in (0,\infty)\right\},$$
it is proved in  \cite{ambrosio-honda-tewodrose} that on any compact $\mathsf{RCD}(K,N)$ space we have $\mm(\mathcal{R}_n\setminus \mathcal{R}^*_n)=0$ and the \emph{regular} Weyl law  
$$\lim_{\lambda \to \infty}\frac{N(\lambda)}{\lambda^{n/2}}=\frac{w_n}{(2\pi)^n}\mathcal{H}^n(\mathcal{R}^*_n)$$
holds if and only if
\begin{equation}\label{eq: AHT condition}
 \lim_{r\to 0^+} \int_X \frac{r^n}{\mm(B_r(x))}\dd \mm=\int_X \lim_{r\to 0^+} \frac{r^n}{\mm(B_r(x))}\dd \mm<\infty\,.
\end{equation}
We remark that in \eqref{eq: AHT condition} it is required both the equality between the two integral expressions, and the finiteness of them.
An important class where these conditions are always satisfied (with $n=N$) is the one of compact \emph{non-collapsed} $\mathsf{RCD}(K,N)$ spaces, namely those $\mathsf{RCD}(K,N)$ spaces where $\mm=\mathcal{H}^N$. The validity of the regular Weyl law was also established for the Dirichlet problem in a bounded domain inside the ambient $\mathsf{RCD}(K,N)$ space $(X,\di,\mathcal{H}^N)$ \cite{zhang-zhu}, and for general compact $\mathsf{RCD}(K,N)$ spaces with essential dimension equal to $1$ \cite{IwKiYo23}.

The situation may drastically change if one works in higher dimensions and removes the non-collapsed assumption: it was shown in \cite{DHPW23} that for any $\beta\in (2,\infty)$ there exist compact $\mathsf{RCD}(-1,N)$ spaces of essential dimension $2$ such that 
$$0<\lim_{\lambda \to \infty}\frac{N(\lambda)}{\lambda^{\beta/2}}<\infty\,.$$  
Moreover, in the same class of spaces the asymptotic regime of the eigenvalues can also exhibit a logarithmic correction of the form
$$\lim_{\lambda \to \infty}\frac{N(\lambda)}{\lambda\log{\lambda}}=\frac{1}{4\pi}\,.$$  
We refer to \cite{ChPrRi24} for some other interesting results concerning \emph{non-regular} Weyl laws in the presence of singularities, where the spaces are non-complete Riemannian manifolds $(\mathbb{M},g)$ with the curvature blowing up when approaching the metric boundary, and to \cite{pesenson20} where, for general metric measure spaces satisfying some weak regularity conditions, it is shown that the number $N(\lambda)$ is related to the cardinality of appropriate covers of the space with balls of radius $\lambda^{-1/2}$.

We proved in \cite{deluca-deponti-mondino-t} that spaces whose only singularities are of D-brane type are $\RCD$. They are not non-collapsed, nonetheless we will check in Sec.\;\ref{sub:mathproofDbrane} that (\ref{eq: AHT condition}) is satisfied for D$p$-branes for $p\geqslant 6$; thus the Weyl law is valid. Notice that D$p$-brane singularities with $p\geqslant 6$ are exactly those for which the spectrum is discrete, as we will clarify in Prop.\;\ref{prop:Dbranediscrete}. On the other hand, we showed in \cite[Sec.~6.4.1]{deluca-deponti-mondino-t-entropy} that spaces with O-plane singularities are not in the $\RCD$ class so we cannot conclude from the $\mathsf{RCD}$ theory that the Weyl law holds for compactifications with O-planes. Our physics argument above, on the other hand, although do not constitute a complete mathematical proof, covers such spaces as well.


\subsection{Validity of Weyl law for singularities of D-brane type via RCD theory}
\label{sub:mathproofDbrane}
Here we prove that exact D$p$ branes satisfy \eqref{eq: AHT condition} and thus the regular Weyl law for $p=6,7,8$,  thanks to  \cite{ambrosio-honda-tewodrose}. Since it will be relevant for the following computations, we start by recalling the precise definition of the space we will treat, focusing for simplicity on the case with $d=4$ uncompactified dimensions. We refer the interested reader to \cite{deluca-deponti-mondino-t} for the definition of spaces with D$p$ branes singularities, $p\leqslant 5$.

\begin{definition}[Exact D-brane metric measure spaces]\label{def: Dp-brane mms}
We define an \emph{exact Dp-brane metric measure space} a smooth and compact Riemannian manifold $(X,g)$ that is glued (in a smooth way) to a finite number of ends where the metric has a precise form that we specify below. 
The distance $\di$ between two points $p,q \in X$ is given by 
$$\di(p,q):=\inf_{\gamma\in \Gamma(p,q)} \int g\left(\gamma'(t),\gamma'(t)\right) \dd t\,,$$
where $\Gamma(p,q)$ denotes the set of absolutely continuous curves joining $p$ to $q$.

The measure $\mm$ is a weighted volume measure $\mm:=\ee^f\mathsf{\dd vol}_g$, where $ \mathsf{\dd vol}_g$ is the Riemannian volume measure associated to $g$ and the function $\ee^f$ is smooth outside the singular set and gives zero mass to it.

Depending on the value of $p$, near the singular set the metric $g$ and the measure satisfy:
\begin{itemize}
\item Case $p=6$. In a neighborhood $\{r<\epsilon\}$ of the singular set $\{r=0\}$,  the metric is of the form  \begin{equation}\label{eq:metricEndAs1}
	g=\dd y^2_{3}+\left( \frac{r_0}{r} \right) \left( \dd r^2+ r^2 \dd s^2_{\mathbb{S}^{2}} \right)\,,
	\end{equation}
with $r_0= g_s (2\pi l_s)/\mathrm{Vol}(\mathbb{S}^{2})$ and $\dd y^2_{3}$ being the flat metric of a $3$-dimensional torus. The measure is given by 
	$$
	\mm\llcorner_{\{r<\epsilon \}}= \sqrt{\frac{r}{r_0}} \, \mathsf{\dd vol}_g \llcorner_{\{r<\epsilon \}}\,.
	$$
        
        \item Case $p=7$. In a neighborhood $\{r<\epsilon\}$ of the singular set $\{r=0\}$,  the metric is of the form
        \begin{equation}\label{eq:metricEndAs7}
	g= \dd y^2_{4}- \frac{2\pi}{g_s}\log (r/r_0) \left( \dd r^2+ r^2 \dd s^2_{\mathbb{S}^{1}} \right)\,,
	\end{equation}
	with $r_0>0$ a constant and $\dd y^2_{4}$ being the flat metric of a $4$-dimensional torus. The measure is given by 
	$$
	\mm\llcorner_{\{r<\epsilon \}}=  \mathsf{vol}_g \llcorner_{\{r<\epsilon \}}\,.
	$$
	\item      Case $p=8$. In a neighborhood $\{|r|<\epsilon\}$ of the singular set $\{r=0\}$,  the metric is of the form
        \begin{equation}\label{eq:metricEndAs8}
	g=\dd y^2_{5}+ (1- h_8 |r|) \dd r^2\,, 
	\end{equation}
	with $h_8>0$ a constant and $\dd y^2_{5}$ being the flat metric of a $5$-dimensional torus. The measure  is given by 
	$$
	\mm\llcorner_{\{|r|<\epsilon \}}= \sqrt{1- h_8 |r|} \, \mathsf{\dd vol}_g \llcorner_{\{|r|<\epsilon \}}\,.
	$$
\end{itemize}
\end{definition}

In the next proposition, we collect some results obtained in our previous works.

\begin{proposition}\label{prop:Dbranediscrete}
An exact Dp-brane metric measure space is a $\mathsf{RCD}(K,N)$ space for some $K\in \mathbb{R}$ and $N\in [1,\infty)$ with essential dimension $n=6$. Moreover:
\begin{itemize}
\item It is a compact metric space and the spectrum is discrete if $p=6,7,8$.
\item The spectrum is not discrete if $p<5$. More precisely, there are no positive eigenvalues below the infimum of the essential spectrum of the Laplacian. 
\end{itemize} 
\end{proposition}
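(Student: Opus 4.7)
The three assertions can be proved in order. First, the $\mathsf{RCD}(K,N)$ property with essential dimension $n=6$ was established for exact D$p$-brane spaces in our previous work~\cite{deluca-deponti-mondino-t}, by a direct Bakry--\'Emery computation on the smooth part combined with an approximation argument to deal with the codimension-$(p-3)$ singular set. That $n=6$ then follows from the Br\'ue--Semola theorem together with the fact that the regular set, as described in Def.~\ref{def: Dp-brane mms}, is a smooth six-dimensional Riemannian manifold of full $\mm$-measure.

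For the second bullet ($p=6,7,8$), I would check compactness by estimating the diameter directly from the metric models in Def.~\ref{def: Dp-brane mms}. Near each singular end the relevant radial arc-length integrals
\[
\int_0^{\epsilon}\sqrt{\tfrac{r_0}{r}}\,\dd r\,,\qquad \int_0^{\epsilon}\sqrt{-\tfrac{2\pi}{g_s}\log(r/r_0)}\,\dd r\,,\qquad \int_0^{\epsilon}\sqrt{1-h_8|r|}\,\dd r
\]
are all finite, while the transverse spheres and tori have bounded diameter. The space is therefore metrically bounded and, being a complete length space, compact by the Hopf--Rinow theorem in its metric formulation. On any compact $\mathsf{RCD}(K,N)$ space the inclusion $H^{1,2}\hookrightarrow L^2$ is compact, so $\Delta_f$ has compact resolvent and its spectrum is discrete.

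For the third bullet ($p<5$) the same radial integrals diverge, so each singular end has infinite length and $(X,\di)$ is a non-compact metric measure space. To simultaneously exhibit essential spectrum and preclude positive eigenvalues below it, I would construct a Weyl sequence: choose Lipschitz cut-offs $\chi_j$ supported on pairwise disjoint shrinking annuli $\{r_j<r<2r_j\}$ with $r_j\to 0$, normalized in $L^2(\mm)$. Being disjointly supported, the $\chi_j$ are orthonormal with no $L^2$-convergent subsequence; a direct estimate of the weighted Cheeger energy shows that $\Ch(\chi_j)/\snorm{\chi_j}_{L^2}^2$ can be made to tend to any preassigned $\lambda\geqslant 0$ by tuning $r_j$ and the shape of $\chi_j$, placing $\lambda$ in $\sigma_{\mathrm{ess}}(\Delta_f)$ via the Persson-type characterization available on $\mathsf{RCD}$ spaces. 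Taking $\lambda=\inf\sigma(\Delta_f)$ shows that the bottom of the spectrum itself lies in $\sigma_{\mathrm{ess}}$, so there can be no isolated positive eigenvalue strictly below $\inf\sigma_{\mathrm{ess}}(\Delta_f)$.

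The principal difficulty is really the first step: verifying the synthetic curvature bound across the singular stratum requires the careful approximation scheme that is the technical heart of~\cite{deluca-deponti-mondino-t}. Granted that, the rest reduces to the explicit integrals above and standard $\mathsf{RCD}$ functional analysis; the only mildly delicate point is calibrating the cut-offs $\chi_j$ in the $p<5$ case so that both the normalization and the Rayleigh quotient are controlled against the degenerating metric and the weight $e^f$.
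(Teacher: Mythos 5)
Your proposal follows the paper's structure for the first two assertions: the $\mathsf{RCD}$ property and essential dimension are quoted from \cite{deluca-deponti-mondino-t}, and compactness for $p=6,7,8$ (which you verify by the finiteness of the radial arc-length integrals, exactly the computation behind the citation in the paper) yields discreteness of the spectrum via the compact embedding $H^{1,2}\hookrightarrow L^2$, which is the content of the result of \cite{GMS} invoked in the paper. Where you genuinely diverge is the $p<5$ bullet. The paper argues through the Cheeger constant: tubular neighbourhoods $B_R$ of the singular set satisfy $\Per(B_R)/\mm(B_R)\to 0$, hence $h_1=0$, and the Buser-type inequality of \cite{deponti-mondino} (as packaged in \cite[Th.~4.2]{deluca-deponti-mondino-t}) then excludes positive eigenvalues below $\inf\sigma_{\mathrm{ess}}$. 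You instead build cut-offs on shrinking annuli and invoke a Persson-type characterization. The two routes exploit the same geometric mechanism (the collapsing isoperimetric ratio of the ends), and your version is in principle more self-contained, at the price of actually carrying out the Rayleigh-quotient estimate that the Cheeger/Buser machinery packages for free; note also that your test functions are essentially the mollified indicators witnessing $h_1=0$.

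One point needs correcting: an orthonormal sequence $\chi_j$ with $\Ch(\chi_j)/\snorm{\chi_j}_{L^2}^2\to\lambda$ does \emph{not} place $\lambda$ in $\sigma_{\mathrm{ess}}(\Delta_f)$ for $\lambda>0$; for that you would need a genuine singular Weyl sequence with $\snorm{(\Delta_f-\lambda)\chi_j}_{L^2}\to 0$, which plain cut-offs do not provide. Fortunately you only need the case $\lambda=0$: Persson's formula expresses $\inf\sigma_{\mathrm{ess}}$ as a supremum over compacta of infima of Rayleigh quotients of functions supported outside them, so your annular cut-offs with vanishing Rayleigh quotient (supported near the end, which lies outside every compact set since the singular locus is at infinite distance when $p<5$) do show $\inf\sigma_{\mathrm{ess}}(\Delta_f)=0$, which implies the stated conclusion. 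You should restrict the claim accordingly. A minor slip: the singular set has dimension $p-3$, hence codimension $9-p$ in the six-dimensional internal space, not codimension $p-3$.
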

\begin{proof}
The validity of the $\mathsf{RCD}(K,N)$ condition was proved in \cite[Th.~3.2]{deluca-deponti-mondino-t}. Since the space is smooth outside the singular set, it is immediate that the essential dimension of the space coincides with the dimension of the underlying manifold and thus $n=6$ (recall here that we are assuming $d=4$ uncompactified dimensions). It was also checked in \cite{deluca-deponti-mondino-t} that a D$p$-brane metric measure space is compact if and only if $p=6,7,8$, and thus for these values of $p$ the spectrum is discrete as a consequence of \cite[Th.~6.3]{GMS}. For $p<5$ it was noticed in \cite[Sec.~4.2.2]{deluca-deponti-mondino-t} that $h_1=0$ since for tubular neighborhoods $B_R$ of the singular set we have $\mathsf{Per}(B_R)/{\mm(B_R)}\to 0$ as $R\to 0$; here, $h_1$ is the Cheeger constant of the space. The last conclusion of the proposition is thus a consequence of \cite[Th.~4.2]{deluca-deponti-mondino-t} (based on the Buser inequality proved in \cite{deponti-mondino}). 
\end{proof}
Although we have not detailed a completely rigorous mathematical argument, we remark that the spectrum is expected to have a continuous part also for D$5$ brane singularities: this is suggested by i) a local study of the eigenvalue equation, ii) a study of the Cheeger constants \cite[Sec.~4.2.2]{deluca-deponti-mondino-t}, an estimate \cite[Sec.~4.1]{deluca-deponti-mondino-t-L2} of capacities based on \cite{cianchi-mazya}. A difference with the cases $p<5$ is that for D$5$-brane metric measure spaces the Cheeger constant is expected to be positive. By similar considerations based on an analysis of tubular neighborhoods of the singularities, we also find that the spectrum of O$p$-plane singularities appears to be discrete.\footnote{This is indeed the case for example for the explicit computation in \cite{passias-richmond}, where an O8 is present.}

We are now ready to state and prove the main result of the section.
\begin{theorem}\label{th: checkDpweyl}
An exact Dp-brane metric measure space satisfies \eqref{eq: AHT condition} (and thus the regular Weyl law) for $p=6,7,8$.
\end{theorem}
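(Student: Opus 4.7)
The plan is to verify the integral identity \eqref{eq: AHT condition} by dominated convergence and then invoke \cite{ambrosio-honda-tewodrose}. From Prop.~\ref{prop:Dbranediscrete} the essential dimension is $n=6$, the space is compact, and the singular set $\Sigma$ is $\mm$-negligible. At every smooth point $x\in X\setminus\Sigma$ the metric $g$ and the density $\ee^f$ are smooth with $\ee^{f(x)}>0$ (note that in the $p=6$ case $\ee^{f}=\sqrt{r/r_0}$ vanishes only \emph{at} $\Sigma$), so a Taylor expansion gives
\begin{equation*}
\lim_{r\to 0^+}\frac{\mm(B_r(x))}{w_n r^n}=\ee^{f(x)}\in (0,\infty),
\end{equation*}
which shows in particular $X\setminus\Sigma\subset \mathcal{R}^*_n$. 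Integrating the pointwise limit against $\mm=\ee^f\mathsf{\dd vol}_g$ yields the candidate right-hand side
\begin{equation*}
\int_X\lim_{r\to 0^+}\frac{r^n}{\mm(B_r(x))}\,\dd\mm=\frac{V(X)}{w_n},
\end{equation*}
which is finite in each case $p=6,7,8$ since in the coordinates of Def.~\ref{def: Dp-brane mms} the density $\sqrt{g}$ is integrable up to the singular set.

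It remains to justify the exchange of limit and integral; the plan is to produce, in each case, an $\mm$-integrable majorant of $r^n/\mm(B_r(x))$ valid uniformly for $r$ in a small interval $(0,R_0)$, and to invoke dominated convergence. Write $\tau(x):=\di(x,\Sigma)$. For $p=8$ the warping factor $1-h_8|r|$ is bounded between two positive constants in a neighborhood of $\Sigma$ and the metric is otherwise flat, so a direct comparison gives $\mm(B_R(x))\gtrsim R^6$ uniformly and the integrand is uniformly bounded. For $p=7$, a Fubini-type computation in the explicit coordinates shows that the transverse metric radius satisfies $\rho(r)\sim r\sqrt{-\log(r/r_0)}$ and that the $\mm$-measure of the transverse ball of radius $\rho$ centred on $\Sigma$ is of order $\rho^2$ (the logarithmic factors from the area element and from the change of radial variable cancel); paired with the smooth $T^4$ factor this yields $\mm(B_R(x))\gtrsim R^6$ uniformly.

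The main obstacle is $p=6$, where the density vanishes on $\Sigma$ and the bound $\mm(B_R(x))/R^6$ itself degenerates. In metric polar coordinates centred on $\Sigma$, the transverse $\mm$-volume element becomes $\tau^3\,\dd\tau\,\dd\Omega_2$ (up to constants), so a transverse metric ball of radius $\rho$ centred on $\Sigma$ has $\mm$-measure of order $\rho^4$, while a transverse ball of radius $R<\tau(x)/2$ centred at a smooth point $x$ has $\mm$-measure of order $\tau(x)R^3$. Coupling each estimate with the flat $T^3$ factor and interpolating the two regimes yields the uniform lower bound
\begin{equation*}
\mm(B_R(x))\ \gtrsim\ R^6\,\max\!\bigl(R,\,\tau(x)\bigr),
\end{equation*}
so that $R^n/\mm(B_R(x))\lesssim \tau(x)^{-1}$ uniformly in $R\in(0,R_0)$. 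Because $\dd\mm\sim \tau^3\,\dd\tau\,\dd\Omega_2\,\dd y^3$ in a tubular neighborhood of $\Sigma$, the function $\tau^{-1}$ belongs to $L^1(X,\mm)$, and dominated convergence then gives \eqref{eq: AHT condition}. The most delicate point is a clean execution of this lower bound across the two regimes $R\ll \tau(x)$ and $R\gg\tau(x)$, where the metric ball transitions from a Euclidean-like region to one that engulfs part of the singular set.
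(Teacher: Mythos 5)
Your proposal is correct and follows essentially the same route as the paper: verify \eqref{eq: AHT condition} by computing the pointwise limit $R^6/\mm(B_R(x))\to C e^{-f(x)}$ at regular points and then justify the exchange of limit and integral by dominated convergence, with a uniformly bounded integrand for $p=7,8$ and the majorant $w(x)=\tilde C/\di(x,\Sigma)\in L^1(\mm)$ for $p=6$. The "delicate point" you defer is precisely what the paper supplies: for $p=6$ it computes $\mm(B_R(x))$ exactly in both regimes $R\leqslant \rho_0$ and $R>\rho_0$ after the substitution $\rho=2\sqrt{r_0 r}$ (confirming your interpolated bound $\mm(B_R(x))\gtrsim R^6\max(R,\tau(x))$), and for $p=7$ it avoids proving your uniform bound $\bar\mm(B_R(x))\gtrsim R^2$ near the singularity by instead splitting into the region $X\setminus B_{2R}(\mathsf{O})$ (where a $\inf H/\sup H$ comparison gives the bound) and the shrinking region $B_{2R}(\mathsf{O})$, whose contribution is shown to vanish directly.
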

\begin{proof}
We start by proving the right hand side inequality in \eqref{eq: AHT condition}. For this purpose, we notice that the set of singularities is $\mm$-negligible and for any other point $x\in X\setminus\{r=0\}$ we have 
\[
\lim_{R\to 0^+} \frac{R^{6}}{\mm(B_R(x))}=\frac{C}{e^{f(x)}}
\]
for a constant $C$ depending only on $r_0$ or $h_8$. This yields
\[
\int_X \lim_{R\to 0^+} \frac{R^{6}}{\mm(B_R(x))}\dd \mm=C\; \mathsf{vol}_g(X\setminus\{r=0\})<\infty\,.
\]

We are left to show the equality
\begin{equation}\label{eq:claimR6mmBR}
 \lim_{R\to 0^+} \int_X \frac{R^{6}}{\mm(B_R(x))}\dd \mm=\int_X \lim_{R\to 0^+} \frac{R^{6}}{\mm(B_R(x))}\dd \mm
\end{equation}
which follows by applying the dominated convergence theorem as we specify below by handling the three different cases.

\textbf{Case} $p=6$. In the following we always suppose to work in the set $\{0<r<\epsilon\}$, which is the only which needs a discussion (the complement is smooth and compact). With the change of variable $\rho=2\sqrt{r_0}\sqrt{r}$, the metric takes the form
$$g=\dd y^2_{3}+\left( \dd \rho^2+ \frac{\rho^2}{4} \dd s^2_{\mathbb{S}^{2}} \right),$$
and the measure is given by 
	$$
	\mm= \frac{\rho}{2r_0} \, \mathsf{\dd vol}_g =\frac{\rho^3}{8r_0}\sin{\theta}\,\dd\rho\,\dd\theta\,\dd\phi\,\dd^{3}y,
	$$
 where $\theta\in [0,\pi]$ and $\phi\in [0,2\pi)$ are the spherical variables in ${\mathbb{S}^{2}}$.
We fix $\overline{R}>0$ and our aim is to find $w\in L^1(\mm)$ such that 
$$\sup_{0<R<\overline{R}} \frac{R^{6}}{\mm(B_R(x))}\leqslant w(x) \qquad \textrm{for} \ \mm-a.e. \ x.$$ 
To compute $\mm(B_R(x))$, we start by noticing that this quantity only depends on $R$ and $\rho_0:=\di(x,\{r=0\})$. Without loss of generality, we can thus suppose \[
x=(y=\underline{0},\rho=\rho_0,\theta=0,\phi=\phi_0)\in X\setminus\{r=0\}\,.
\]
Expressing the flat space in polar coordinates with radial coordinate $\varrho$, we have
\begin{equation}\label{eq:exprmeasballproof}
\mm(B_R(x))=4\pi\int_0^R\varrho^{2}I\left(\rho_0,\sqrt{R^2-\varrho^2}\right)\,\dd\varrho\,,
\end{equation}
with
\begin{align*}
I(\rho_0,\rho_1):&=\int_{\tilde{B}_{\rho_1}(x)}\frac{\rho^3}{8r_0}\sin{\theta}\,\dd\rho\,\dd\theta\,\dd\phi=\frac{\pi}{2r_0}\int_{\max\{\rho_0-\rho_1;0\}}^{\rho_0+\rho_1}z'\left(\rho_1^2-(z'-\rho_0)^2\right)\dd z'\\
&=\begin{cases}
    \frac{2\pi}{3r_0}\rho_0\rho_1^3 \qquad &\textrm{if } \rho_0\geqslant \rho_1\\
    \frac{\pi}{24r_0}(3\rho_1-\rho_0)(\rho_0+\rho_1)^3 \qquad &\textrm{if } \rho_0< \rho_1\,,
\end{cases}
\end{align*}
where $\tilde{B}_{\rho_1}(x)$ denote the geodesic ball in the $3$-dimensional factor with metric $\tilde{g}=\dd \rho^2+ \frac{\rho^2}{4} \dd s^2_{\mathbb{S}^{2}}$, and the expression of $I$ is motivated by the fact that, up to the change of variables $\theta'=\theta/2\in [0,\pi/2]$, this geodesic ball corresponds to the $3$-dimensional Euclidean ball.
The integral in \eqref{eq:exprmeasballproof} can be computed explicitly: for $R\leqslant  \rho_0$ the expression is given by
\[
\mm(B_R(x))=\frac{8\pi^2\rho_0}{3r_0}\int_0^{R}\varrho^2(R^2-\varrho^2)^{\frac{3}{2}}\dd\varrho=\frac{\pi^3}{12r_0}\rho_0R^6\,;
\]
when $R> \rho_0$ we get instead the following expression
\begin{align*}
\mm(B_R(x))=\frac{8\pi^2\rho_0}{3r_0}&\int_0^{(R^2-\rho_0^2)^{\frac{1}{2}}}\varrho^2(R^2-\varrho^2)^{\frac{3}{2}}\dd\varrho\\
&+\frac{\pi^2}{6r_0}\int_{(R^2-\rho_0^2)^{\frac{1}{2}}}^R\varrho^2\left(3(R^2-\varrho^2)^{\frac{1}{2}}-\rho_0\right)\left(\rho_0+(R^2-\varrho^2)^{\frac{1}{2}}\right)^3\dd\varrho
\end{align*}
\begin{align*}
=\frac{\pi^2}{1260r_0}\bigg(105\rho_0R^6\left(\pi-\arccos\left(\frac{\rho_0}{R}\right)\right)+(R^2-\rho_0^2)^{\frac{1}{2}}(48R^6+87\rho_0^2R^4-38\rho_0^4R^2+8\rho_0^6)\bigg).
\end{align*}

Since $48R^6+87\rho_0^2R^4-38\rho_0^4R^2+8\rho_0^6\geqslant 0$ and $\pi-\arccos(\rho_0/R)\geqslant \pi/2$, it follows 
\[
\sup_{0<R<\overline{R}} \frac{R^{6}}{\mm(B_R(x))}\leqslant \frac{\tilde{C}}{\rho_0}
\]
for a suitable constant $\tilde{C}=\tilde{C}(r_0, \overline{R})$ independent on $\rho_0$. The proof is concluded by taking 
\[
w(x):=\frac{\tilde{C}}{\di(x,\{r=0\})}\in L^1(\mm).
\]

\vspace{3mm}
 For the cases $p=7,8$, the direct computations are more heavy. It is therefore useful to perform some simplifications.  Note we can neglect the flat part given by the Euclidean metric $\dd x^2_{p+1-d}$ since it is smooth and with finite diameter (and thus with finite volume). 
We denote by $\mathsf{O}:=\{r=0\}$ the singular point. Near $\mathsf{O}$, the Riemannian metric and the measure are of the form  
\begin{equation}\label{eq:gmH}
\bar{g}=H(r)(\dd r^2+r^2 \dd s^2_{\mathbb{S}^{8-p}}), \quad \bar{\mm}=H(r)^{\frac{p-7}{2}}\dd\mathsf{vol}_{\bar{g}},
\end{equation}
with $H(r)=-\frac{2\pi}{g_s}\log(r/r_0)$ for $p=7$, and $H(r)=1-h_8|r|$ for $p=8.$

\textbf{Case} $p=7$.
Thanks to the dimension reduction discussed above,  \eqref{eq:claimR6mmBR} reduces to show that, for some $\varepsilon_0>0$,
\begin{equation}\label{eq:claimR2mmBR}
 \lim_{R\to 0^+} \int_{B^{\bar{g}}_{\varepsilon_0}(\mathsf{O})} \frac{R^{2}}{\bar \mm(B_R(x))}\dd \bar{\mm}=\int_{B^{\bar{g}}_{\varepsilon_0}(\mathsf{O})} \lim_{R\to 0^+} \frac{R^{2}}{\bar \mm(B_R(x))}\dd \bar\mm.
\end{equation}
Note that \eqref{eq:claimR2mmBR} is a consequence of the following two claims: 
\begin{align}
&\text{Claim 1:} \quad   \lim_{R\to 0^+} \int_{B^{\bar{g}}_{\varepsilon_0}(\mathsf{O})\setminus B^{\bar{g}}_{2R}(\mathsf{O}) } \frac{R^{2}}{\bar \mm(B_R(x))}\dd \bar{\mm}=\int_{B^{\bar{g}}_{\varepsilon_0}(\mathsf{O})} \lim_{R\to 0^+} \frac{R^{2}}{\bar \mm(B_R(x))}\dd \bar\mm. \label{Claim1-p7}\\
&\text{Claim 2:} \quad   \lim_{R\to 0^+} \int_{B^{\bar{g}}_{2R}(\mathsf{O}) } \frac{R^{2}}{\bar \mm(B_R(x))}\dd \bar{\mm}= 0. \label{Claim2-p7}
\end{align} 

We start by showing the first claim \eqref{Claim1-p7}. 
We denote by $C(\mathbb{S}^1)$ the metric cone over $\mathbb{S}^1$ with metric $\dd r^2+r^2 \dd s^2_{\mathbb{S}^{1}}$.
From \eqref{eq:gmH}, we have
\begin{equation}\label{eq:mmLB}
 \bar{\mm}\llcorner B^{\bar{g}}_R(x)\geqslant \inf_{B^{\bar{g}}_R(x)} H\;  \dd\mathsf{vol}_{r}\,\dd\mathsf{vol}_{\mathbb{S}^1}
\end{equation}
and
\begin{equation}\label{eq:ballInclusions}
B^{\bar{g}}_R(x)\supset B^{C(\mathbb{S}^1)}_{\frac{R} {\sup_{B^{\bar{g}}_R(x)} \sqrt{H}}}(x).
\end{equation}
The $\bar{g}$-distance $\rho$ between $\mathsf{O}$ and a point of coordinate $r$ is given by (up to a multiplicative constant that is not relevant for this argument)
\begin{equation}\label{eq:rtorho}
\rho=\int_0^r \sqrt{-\log(s)}\,\dd s\,.
\end{equation}
It is easy to check that
\begin{equation}\label{eq:rhorlog}
\lim_{r\to 0^+} \frac{\rho(r)}{r \sqrt{-\log  r}} =1\,.
\end{equation}
The combination of \eqref{eq:mmLB} and \eqref{eq:ballInclusions} gives that there exists constants $c_1, c_2,c_3,c_4>0$ such that for all $x\in B^{\bar{g}}_{\varepsilon_0}(\mathsf{O})\setminus B^{\bar{g}}_{2R}(\mathsf{O})$ it holds
\begin{align}
\bar{\mm} (B^{\bar{g}}_R(x)) &\geqslant  c_1  R^2 \; \frac{\inf_{B^{\bar{g}}_R(x)} H} {\sup_{B^{\bar{g}}_R(x)} H} \geqslant  c_2  R^2 \; \frac{\inf_{B^{\bar{g}}_R(x)} -\log{r}} {\sup_{B^{\bar{g}}_R(x)} -\log r}, \nonumber \\
& \geqslant c_2 R^2  \; \frac{-\log{r(\rho(x)+R)}} { -\log r(\rho(x)-R)}, \nonumber \\
& \geqslant c_3 R^2  \; \frac{r(R)} { r(3R)},  \nonumber \\
&\geqslant c_4 R^2,   \label{mmBLB}
\end{align} 
where in the third line we used L'H\^opital's rule, and in the fourth line we used \eqref{eq:rhorlog}.  The claim \eqref{Claim1-p7} then follows by dominated convergence theorem by observing that the constant $c_4^{-1}$ is in $L^1(B^{\bar{g}}_{\varepsilon_0}(\mathsf{O}), \bar{\mm})$ and that \eqref{mmBLB} guarantees
$$
\frac{R^{2}}{\bar \mm(B_R(x))} \;  \chi_{B^{\bar{g}}_{\varepsilon_0}(\mathsf{O})\setminus B^{\bar{g}}_{2R}(\mathsf{O}) } (x) \leqslant c_4^{-1} \in L^1(B^{\bar{g}}_{\varepsilon_0}(\mathsf{O}), \bar{\mm}). 
$$

We now show the second claim \eqref{Claim2-p7}.
\\It is useful to perform a change of variables, so as to reduce the argument to a standard cone. Following the proof of \cite[Thm.\,3.2]{deluca-deponti-mondino-t}, we perform the change of variable \eqref{eq:rtorho}
 so that, in a neighbourhood of the  singular point $\mathsf{O}$, the Riemannian metric $\bar{g}$ becomes
\begin{equation}\label{eq: cone metricp7}
\bar{g}=\dd\rho^2+f(\rho)\dd s^2_{\mathbb{S}^1},
\end{equation}
and the measure takes the form
\begin{equation}\label{eq: cone weighted measp7}
\bar{\mm}=\sqrt{f(\rho)}\,\dd\mathsf{vol}_{\rho}\,\dd\mathsf{vol}_{\mathbb{S}^1},
\end{equation}
where $f(\rho)$ represents the factor $-\log(r)r^2$, expressed in the new variable $\rho$.  One can check that
\begin{equation}\label{eq:Propfrho}
0\leqslant f(\rho)\leqslant \rho^2, \quad \rho\mapsto f(\rho) \text{ is increasing for small } \rho.
\end{equation}
This estimate and the formula \eqref{eq: cone metricp7} of $\bar{g}$ yield that the distance $\di_{\bar{g}}$ associated to $\bar{g}$ satisfy the following properties:
\begin{itemize}
\item For every $\theta \in \mathbb{S}^1$,  it holds $\di_{\bar{g}}((\theta,\rho),\mathsf{O})=\rho$.
\item The distance $\di_{\bar{g}}$ is bounded above by the cone distance $\di_{C(\mathbb{S}^1)}$, i.e.~for every pair of points $x_0:=(\theta_0,\rho_0)$,  $x_1:=(\theta_1,\rho_1)$ it holds  
$$\di_{\bar{g}}(x_0,x_1)\leqslant \di_{C(\mathbb{S}^1)}(x_0,x_1):=\sqrt{\rho_0^2+\rho_1^2-2\rho_0\rho_1\cos(\di_{\mathbb{S}^1}(\theta_0,\theta_1)}).$$ 
\end{itemize} 
The second fact implies that, for every base point $x$ and every radius $R>0$,  we have the inclusion of metric balls
$
B^{C(\mathbb{S}^1)}_R(x)\subset B^{\bar{g}}_R(x),
$
which implies 
\begin{equation}\label{eq:volBallsp7}
\bar{\mm}(B^{C(\mathbb{S}^1)}_R(x)) \leqslant \bar{\mm}(B^{\bar{g}}_R(x)).
\end{equation}
Using \eqref{eq: cone weighted measp7} and \eqref{eq:Propfrho}, one can check that there exists $\varepsilon_0>0$ such that 
\begin{equation}\label{eq:mmBReps}
\bar{\mm}(B^{C(\mathbb{S}^1)}_R(x)) \geqslant \varepsilon_0  \, \bar{\mm}(B^{C(\mathbb{S}^1)}_R(\mathsf{O})), \quad \forall R\in (0, \varepsilon_0].
\end{equation}
A direct computation gives
\begin{equation}\label{eq:mmBO}
\bar{\mm}(B^{C(\mathbb{S}^1)}_R(\mathsf{O}))=2\pi\int_0^R \sqrt{f(\rho)}\, \dd \rho \geqslant \pi r(R)^2,
\end{equation}
where $r(R)$ is the value of the variable $r$ when $\rho=R$, with $r$ and $\rho$ are related by  \eqref{eq:rtorho}. 
Combining \eqref{eq:volBallsp7}, \eqref{eq:mmBReps} and \eqref{eq:mmBO}, we obtain
\begin{equation}\label{eq:Claimp7pf1}
\frac{R^2}{\bar \mm(B^{\bar g}_R(x))}  \leqslant  \frac{1}{\pi \varepsilon_0} \frac{R^2}{r(R)^2}\,,  \qquad \forall  R\in (0, \varepsilon_0], \ \ \forall x\in B^{\bar g}_{\varepsilon_0}(\mathsf{O}).
\end{equation}
Recalling that $(0,\varepsilon_0]\ni \rho\mapsto f(\rho)$ is increasing, from \eqref{eq:rhorlog} and \eqref{eq:Claimp7pf1} we have that
 \begin{align}
 \frac{R^2}{\bar \mm(B^{\bar g}_R(x))}  \bar \mm &\leqslant  \frac{1}{\pi \varepsilon_0} \frac{R^2}{r(R)^2} \sqrt{f(3R)} \, \dd\mathsf{vol}_{\rho}\,\dd\mathsf{vol}_{\mathbb{S}^1}, \nonumber \\
 & \leqslant   \frac{2}{\pi \varepsilon_0} \frac{r^2 |\log r|}{r^2} \sqrt{-\log(3r) (3r)^2} \sqrt{|\log r|} \, \dd\mathsf{vol}_{r}\,\dd\mathsf{vol}_{\mathbb{S}^1}\nonumber \\
& \leqslant \frac{6}{\pi \varepsilon_0} r\, |\log r| \sqrt{\log(3r) \log(r) }  \, \dd\mathsf{vol}_{r}\,\dd\mathsf{vol}_{\mathbb{S}^1} \nonumber \\
& \leqslant C  \, \dd\mathsf{vol}_{r}\,\dd\mathsf{vol}_{\mathbb{S}^1}  \qquad \forall  R\in (0, \varepsilon_0], \ \ \forall x\in B^{\bar g}_{2R}(\mathsf{O}),\label{eq:lastclaimp7}
 \end{align}
 for some constant $C=C(\varepsilon_0)$ independent of $R$, as long as $R\in (0, \varepsilon_0]$. Observing that \eqref{eq:rtorho} gives $\lim_{\rho\to 0^+} r(\rho)= 0$, it is clear that \eqref{eq:lastclaimp7} implies the second claim \eqref{Claim2-p7}.

\vspace{3mm}
\textbf{Case} $p=8$.
\\Let us start by recalling the expression of the metric and measure of a D$8$-brane. Given a positive constant  $h_{8}>0$, set
\begin{equation*}
H(r):= 1-h_{8} |r|, \quad \text{for } r\in [- (2h_{8})^{-1}, (2h_{8})^{-1}]. 
\end{equation*}
The metric and the measure have the following expressions
\begin{equation*}
\bar{g}=  H(r)\, \dd r^2, \quad  \bar{\mm}= H^{1/2} \dd \text{vol}_{\bar{g}}, \quad   \text{for } r\in [- (2h_{8})^{-1}, (2h_{8})^{-1}]. 
\end{equation*}
Thanks to the dimension reduction discussed before the proof of the case $p=7$,  \eqref{eq:claimR6mmBR} reduces to show that, for some $\varepsilon_0>0$,
\begin{equation}\label{eq:claimR1mmBR}
 \lim_{R\to 0^+} \int_{B^{\bar{g}}_{\varepsilon_0}(\mathsf{O})} \frac{R}{\bar \mm(B_R(x))}\dd \bar{\mm}=\int_{B^{\bar{g}}_{\varepsilon_0}(\mathsf{O})} \lim_{R\to 0^+} \frac{R}{\bar \mm(B_R(x))}\dd \bar\mm\,.
\end{equation}
It is useful to perform a change of variables, so to reduce the argument to the standard metric on an interval with a weighted measure. Following the proof of \cite[Thm.\,3.2]{deluca-deponti-mondino-t}, we perform the change of variable
\begin{equation*}
t(r):=- \frac{2}{3h_{8}} \sgn(r) \left[ (1-h_{8} |r|)^{3/2} -1 \right],
\end{equation*}
which transforms $\bar{g}$ into the euclidean metric on a segment in the real line and the measure  $\bar{\mm}$ into
\begin{equation}\label{eq:hatmm}
\hat{\mm}= \left( 1- \frac{3}{2} h_{8} |t| \right)^{1/3} \dd t.
 \end{equation}
 The claim \eqref{eq:claimR1mmBR} is thus equivalent to show that, for some $\varepsilon_0>0$,
\begin{equation}\label{eq:claimR1mmBR2}
 \lim_{R\to 0^+} \int_{[-\varepsilon_0, \varepsilon_0]} \frac{R}{\hat{\mm}([x-R, x+R])}\dd \hat{\mm}=\int_{[-\varepsilon_0, \varepsilon_0]}  \lim_{R\to 0^+}  \frac{R}{\hat{\mm}([x-R, x+R])}\dd \hat{\mm}.
\end{equation}
Notice that there exists $c_0=c_0(\varepsilon_0)>0$ independent of $x,R\in [-\varepsilon_0, \varepsilon_0]$ such that
\begin{equation}\label{eq:LBh8}
\inf_{t\in [x-R, x+R]} \left(1- \frac{3}{2} h_{8} |t|\right)^{1/3} \geqslant c_0,  \quad \text{ for all } x,R\in [-\varepsilon_0, \varepsilon_0],
\end{equation}
provided $\varepsilon_0>0$ is small enough, only depending on $h_8>0$.
The combination of \eqref{eq:hatmm} and \eqref{eq:LBh8} gives that
\begin{equation*}
\frac{R}{\hat{\mm}([x-R, x+R])} \leqslant \frac{1}{2c_0} ,  \quad \text{ for all } x,R\in [-\varepsilon_0, \varepsilon_0].
\end{equation*}
Since $\frac{1}{2c_0}\in L^1( [-\varepsilon_0, \varepsilon_0], \hat{\mm})$, the claim \eqref{eq:claimR1mmBR2} follows by dominated convergence theorem.

\end{proof}

\section{Conclusions} 
\label{sec:conc}
We have seen how to use consistency of gravity compactifications to derive the Weyl law for eigenvalues.
The argument proceeds by studying the gravitational potential between two point sources and requiring agreement at short distances among the four- and $D$-dimensional behavior. This results in a relation among the Green's function and the spectral data of the Laplacian (Sec.~\ref{sub:comp}) which we also derived in a mathematically precise way in Sec.~\ref{sub:grav-sc},  and repeat here:
\begin{equation}
	\ee^{f(y_0)}\lim_{r\to 0} r^{n}\sum_k  \psi^2_k(y_0)\ee^{- m_k r} = \frac{m_4^2}{m_D^{D-2}}  \frac{\Gamma(\frac{n+1}{2})}{\pi^{\frac{n+1}{2}}}.\label{eq:weylconcl}
\end{equation}
Equation \eqref{eq:weylconcl} implies the Weyl law once rendered independent from the eigenfunctions $\psi_k$, and in Sec.~\ref{sec:grav-weyl} we studied two ways to achieve this:  i) integrating it over the internal space and ii) studying the asymptotic behavior of $\psi_k(y_0)$ at large $k$.
However, a puzzle seems to arise from \eqref{eq:weylconcl} for warped compactifications: the 4-dimensional Planck mass $m_4$ depends on the weighted volume $V_f  = \int_X\sqrt{g_n} \ee^f$, while the Weyl law always depend on the geometric volume of the internal space. 

In the integrated argument i), this is easily resolved by the $\ee^{f(y_0)}$ factor on the left hand side of \eqref{eq:weylconcl}, which upon integration gives a weighted volume that compensates with the one inside $m_4^2$, leaving behind the geometric volume that appears in the Weyl law (Sec.~\ref{sub:ave}).

Method ii) instead is a local procedure, more aligned in spirit with the physical idea of studying the local behavior of the potential, and in Sec.~\ref{sub:wqe} it leads us to introduce the novel notion of \emph{Weighted Quantum Ergodicity} (WQE):
	\begin{equation}\label{eq:WQE-concl}
		\lim_{\substack{k\to \infty\\ k\notin e}}\frac{\int_B \sqrt{g}\ee^f\psi_k^2}{\int_{X} \sqrt{g}\ee^f\psi_k^2}\ = \  \frac{V(B)}{V(X)} \,,
	\end{equation}
for a negligible set $e\subset \mathbb{N}$. This extends the standard notion of Quantum Ergodicity, to which it reduces for $f = 0$, and implies that for large $k$ the $\psi_k^2(y_0)$ oscillates around $\ee^{-f(y_0)}$, as shown in Fig.~\ref{fig:WQE-1d} for an explicit example.

This latter property of $\psi_k^2(y_0)$ connects to the physics of gravity localization, as we started investigating in Sec.~\ref{sub:loc}. In this context, knowledge of the behavior of $\psi_k$ allows to infer whether gravity can be localized in four dimensions even in situations where the internal space is not of finite (warped) volume. However, very few results on the behavior of eigenfunctions are known, particularly in the physically important case of warped compactifications. Through the notion of WQE, the results in this paper provide novel insights about the asymptotic behavior of the $\psi_k$ for generic scenarios in which ergodicity holds. It would be interesting to expand this connection more in the future, for example analyzing in more detail the cases with many symmetries, in which, generically, ergodicity is not expected to hold.
\medskip

Finally, in Sec.~\ref{sec:weyl-rev} we rigorously proved the Weyl law for spaces with D$p$-brane singularities, for $p = 6,7,8$, which, as we formalized in Prop.~\ref{prop:Dbranediscrete}, are the cases in which the spectrum is discrete. 
For other branes, the spectrum is instead continuous, and we did not attempt a rigorous proof of a Weyl-type law for those cases. 
While for spaces with O-plane singularities the spectrum is always expected to be discrete, they are outside of the RCD class \cite[Sec.~6.4.1]{deluca-deponti-mondino-t-entropy} and thus outside of the class of spaces covered by current mathematical techniques. Nevertheless, the physical derivation still applies and we expect a similar validity of the Weyl law, albeit we cannot provide a rigorous proof yet.


\section*{Acknowledgements}

We thank Alex Belin, Naomi Gendler, Miguel Montero, Ignacio Ruiz, Irene Valenzuela, and Zhenbin Yang for discussions. GBDL is supported in part by the NSF Grant PHY-2310429. NDP is supported by the INdAM-GNAMPA project “Mancanza di regolarità e spazi non lisci:
studio di autofunzioni e autovalori”, CUP E53C23001670001. AM is supported by the ERC Starting Grant 802689 ``CURVATURE''. AT is supported in part by INFN and by MUR-PRIN contract 2022YZ5BA2.

\appendix

\section{Examples} 
\label{app:ex}

\subsection{Ergodic examples} 
\label{sub:ex-erg}

We will consider examples in one dimension, with $X=\mathbb{S}^1$. The metric can of course always to be taken to be the Euclidean distance, but we can still consider different choices of $f=-A$.

Consider first the piecewise-linear
\begin{equation}
	f= |x|
\end{equation}
on $X=([-1/2,1/2]/\!\!\sim) \cong \mathbb{S}^1$, with $\sim$ denoting identification of the two endpoints; $V(X)=1$. The eigenfunctions can be found explicitly: after imposing that they are in $C^1(\mathbb{S}^1)$, 
\begin{equation}\label{eq:ex-erg1}
	\psi_k= \ee^{-|x|/2} \left(c_+ \sin(2\pi k x)+ c_- (\sin(2\pi k x)+ 4 \pi c_- k \cos(2\pi k x))\right)\,,
\end{equation}
for $k\in \mathbb{N}$. The eigenvalues are $\lambda_k= \frac 14 + 4\pi^2 k^2$. Notice that this satisfies the Weyl law (\ref{eq:weyl-count}) after taking into account that each level has multiplicity two: $N(\lambda_k)\sim 2 k \sim \sqrt{\lambda_k}/\pi$ as $k\to\infty$. Indeed the leading term in $k$ is $4\pi^2 k^2$, the same as for the $f=0$ problem.

When considering (\ref{eq:WQE}) on an interval $B$, the $\ee^f=\ee^{|x|}$ in the integrals cancels with the $\ee^{-|x|/2}$ prefactor in (\ref{eq:ex-erg1}), and the integrand becomes a simple (translated) sine squared. This is equal to a constant plus an oscillating term; upon taking $k\to \infty$, the latter is suppressed. For example, taking $c_+=\sqrt2$, $c_-=0$ we have$\int_X \ee^f \psi_k^2=1$; if $B=[x_0- \delta x/2,x_0+ \delta x/2]\subset \{x>0\}$,
\begin{equation}
	\int_B \ee^f \psi_k^2 = \delta x - \frac1{2\pi k} \cos(4\pi k x_0)\sin(2\pi k \delta x)\,.
\end{equation}
The second term is a contribution from the oscillation, and it gets suppressed as $k\to \infty$, leaving a limit $\delta x = V(B)$, in agreement with (\ref{eq:WQE}).

With more complicated choices of $f$, analytic expressions for the $\psi_k^2$ are not always available (and even less for their integrals); however, a numerical study is still relatively straightforward. Plotting the wavefunctions at relatively large $k$, we see that they indeed oscillate around $(V_f/V)\ee^{-f}$. See Fig.~\ref{fig:WQE-1d} for one particular $f$. 


\subsection{A non-ergodic example} 
\label{sub:ex-nerg}

As we mentioned, the WQE does not hold for a space that has many symmetries, just like the ordinary QE. Here we will present such an example, where we were able nevertheless to check that our gravitational formula (\ref{eq:grav-weyl}) holds anyway.

AdS$_7$ solutions \cite{afrt,10letter,cremonesi-t} have internal (Einstein frame) metric and weight function 
\begin{equation}\label{eq:ads7}
	\begin{split}
	 \dd s^2_\mathrm{E} &= \frac{(\dot \alpha^2- 2 \alpha \ddot \alpha)^{1/4}}{2^{1/8} \,9 \,\pi^{1/4}} \left(-\frac {\ddot \alpha}{\alpha}\right)^{7/8}\left( \dd z^2 + \frac{\alpha^2}{\dot \alpha^2- 2 \alpha \ddot \alpha}\dd s^2_{\mathbb{S}^2}\right)
		\, ,\\
		\ee^f &= \ee^{8A_\mathrm{E}}= \frac{2^{11}\sqrt2}{3^8 \pi} \sqrt{-\frac{ \alpha}{\ddot \alpha}} (\dot \alpha^2- 2 \alpha \ddot \alpha)\,.
	\end{split}
\end{equation}
$\alpha$ is a function on an interval with coordinate $z$; it is piecewise-cubic, with coefficients suitably quantized and obeying certain boundary conditions. 

Perhaps the simplest non-trivial example is $\alpha=\frac{27}2 n_0\pi^2 z (N^2 - z^2)$, with $z\in[0,N]$, where $n_0=2\pi F_0$ and $N$ are the Romans mass and NSNS three-form flux quanta respectively. There are $k=n_0 N$ D6-branes at $z=N$. The spin-two KK problem in this case can be solved analytically \cite[Sec.~4.2]{passias-t}.\footnote{The analysis was partially redone recently in a more modern language in \cite{lima-spin2}.} The eigenfunctions read
\begin{equation}\label{eq:psi-AdS7}
	\psi_{\ell,j}=\alpha^{\ell}  P^{(2\ell+1,\ell+1/2)}_j\left(-1+2z^2/N^2\right) Y_\ell^m \,,
\end{equation} 
where the $P$ are Jacobi polynomials, and the $Y$ are spherical harmonics on the $\mathbb{S}^2$ in (\ref{eq:ads7}). The eigenvalues read
\begin{equation}
	m^2_{\ell,j}= 4\ell(4\ell+6) + \frac83j(5+2j+6\ell)\,.
\end{equation}

The angular momentum associated to the symmetries of the $\mathbb{S}^2$ is conserved, and while the motion is chaotic, it cannot get all the way to the poles of the sphere.\footnote{In a geometry $-\dd t^2 + a^2 \dd z^2 + b^2 \dd s^2_{\mathbb{S}^2}$, the geodesic motion can be simplified using the conserved quantities $\dot t:=E$, $L^2= b^4 (\dot \phi^2 + \sin^2 \phi \dot\theta^2) $; this results in $a^2 z^2 + L^2 b^{-2}= E^2-1$, which has the form of a conserved energy for a particle moving in a one-dimensional potential. Compactness demands that $b$ goes to zero at the endpoints of the $z$ interval, so the effective potential $L^2 b^{-2}$ blows up at the endpoints, and the motion can never get there. The chaotic nature of the classical motion of a \emph{string} (rather than a particle) in this geometry was studied in \cite{filippas-nunez-vangorsel}.} Indeed the eigenfunctions (\ref{eq:psi-AdS7}) are of the WKB form, oscillating in a subinterval of  $[0,1]\ni x$, and decaying exponentially outside it; this reflects the classical analysis, just like in the $\mathbb{S}^2$ discussion in Sec.~\ref{sub:erg}. Thus the model does not display the WQE property.\footnote{As mentioned in Sec.~\ref{sub:erg}, it is sometimes possible to achieve a version of quantum ergodicity by considering random orthonormal bases; however in the present case such a basis would mix different eigenvalues, and would not be made of eigenfunctions, as in \cite{zelditch-QE-highdimension,maples}.}
Moreover, the spectrum depends on two integers $(\ell,j)$, and their ordering in terms of growing $m^2$ is not particularly elegant.

Nevertheless, given that both the $\psi_{\ell,j}$ and $m_{\ell,j}$ are known explicitly, we have managed to checked numerically that (\ref{eq:grav-weyl}) does hold even in this example. Summing $O(10^2)$ eigenvalues is enough to obtain a precision of $O(10^{-2})$.



\section{Heat equation and Weyl law} 
\label{app:heat}

The modern proofs of the Weyl's law are based on the heat equation. Here we try to give a rough idea of how this works in the unweighted case. 

In the spirit of the rough idea described in Sec.~\ref{sub:ave}, we write $\lambda_k \sim (\alpha k^\nu)^2$ for large $k$, and we try to determine the coefficients $\alpha$, $\nu$ by using the heat equation 
\begin{equation}
	\partial_t u(x,t) = -\Delta u(x,t)\,.
\end{equation}
A solution can be found by expanding $u= \sum u_k \psi_k$; the equation then fixes $u_k = u_k^0 \ee^{- \lambda_k t}$. 
A fundamental solution $u(x,y,t)$ is one for which $\lim_{t\to 0} u(x,y,t)= \delta(x-y)$. 
For this, we notice that $\delta(x-y)= \frac1{V(X)}\sum_k \psi_k(x)\psi_k(y)$, with the normalization $\int_{X} \sqrt{g} \psi_k \psi_l = \delta_{kl}$. This fixes the constants $u_k^0$. We infer that
\begin{equation}\label{eq:heat-psi2}
	u(x,y,t)= \frac1{V(X)}\sum_k \ee^{-\lambda_k t}\psi_k(x) \psi_k(y)\,.
\end{equation}
When $X=\mathbb{R}^n$, the spectrum is continuous; the analogue of the former expression is the Fourier transform: 
\begin{equation}\label{eq:fund-heat}
	u = \frac1{(2\pi)^{n/2}} \int \dd^n p \,\ee^{-\ii p \cdot x - ||p||^2 t} = 
	\frac1{(4\pi t)^{n/2}} \ee^{-||x-y||^2/4t}\,.
\end{equation}
On a general $X$, at small time the heat generated by the source at $t=0$ has not had time to explore all the space, so one expects it to behave as in (\ref{eq:fund-heat}). Integrating (\ref{eq:heat-psi2}), (\ref{eq:fund-heat}) over $X$ one then obtains as $t\to 0$
\begin{equation}\label{eq:heat0}
	\frac1{(4\pi t)^{n/2}} V(X)\quad \sim \quad\int_{X} \sqrt{g} u(x,x,t) = \sum_k \ee^{-\lambda_k t} \sim \sum_k \ee^{- \alpha^2 k^{2/\nu} t}\,.
\end{equation}
Since $t$ is small, we can approximate the sum by recalling (\ref{eq:sum-int}). Taking $\epsilon=(\alpha \sqrt{t})^{\nu}$, (\ref{eq:heat0}) becomes
\begin{equation}\label{eq:weyl-heat}
	\frac1{(4\pi t)^{n/2}} V(X) \sim \frac1{(\alpha \sqrt t)^{\nu}}\int_0^\infty \dd p \, \ee^{-p^{2/\nu}}
	= \frac{\Gamma\left(1+\nu/2\right)}{(\alpha \sqrt t)^{\nu}}\,.
\end{equation}
Comparing the powers of $\delta t$ yields $\nu=n$; the overall coefficient gives $\alpha=a$ as in (\ref{eq:weyl}).

\section{Proofs of Lemmas about limits of Green's function}\label{app:green}
In this Appendix, we prove the Lemmas \ref{lemma:ef0} and \ref{lemma:Rn} we used in Sec.~\ref{sub:grav-sc} to obtain equation \eqref{eq:scal-weyl}, which served as a basis for our argument to derive the Weyl law and for the introduction of the notion of weighted quantum ergodicity.

\begin{proof}[Proof of Lemma \ref{lemma:ef0}]
	By definition of Green's function, for any smooth test function $\xi$, we have
	\begin{equation}
		\xi(z_0) = \int_{M_p} \dd^p z \sqrt{g} \ee^f\, \xi \Delta_f\left(G_{f,z_0}\right)  = \int_{M_p} \dd^p z \sqrt{g} \ee^f\, G_{f,z_0} \Delta_f\left(\xi\right)  
	\end{equation}
	for any $z_0 \in M_p$.
Now, consider a ball $B(z_0; \varepsilon)$, centered at $z_0$ and with radius $\varepsilon$, and split the last integral on the right hand side in two pieces:
\begin{equation}\label{eq:lem1int}
	\int_{M_p} \dd^p z \sqrt{g} \ee^f\, G_{f,z_0} \Delta_f\left(\xi\right) = \int_{B(z_0; \varepsilon)} \dd^p z \sqrt{g} \ee^f\, G_{f,z_0} \Delta_f\left(\xi\right)+ \int_{M_p\setminus B(z_0; \varepsilon)} \dd^p z \sqrt{g} \ee^f\, G_{f,z_0} \Delta_f\left(\xi\right). 
\end{equation}

To prove \eqref{eq:C1}, we make a suitable choice  of test function $\xi$. Taking a $\xi^\prime$ with compact support away from $z_0$, we construct $\xi = (1+\xi^\prime)\zeta$, where $\zeta$ has compact support in $M_p$ and is identically equal to $1$ in a neighbourhood of $z = z_0$. For such a $\xi$, the first integral on the right hand side of \eqref{eq:lem1int} is regular, and thus
\begin{equation}
	\lim_{\varepsilon \to 0}\int_{B(z_0; \varepsilon)} \dd^p z \sqrt{g} \ee^f\, G_{f,z_0} \Delta_f\left(\xi\right) = 0 \;.
\end{equation}
We now analyze the second integral. Since we are now away from the singularity of the Green's function, we can freely integrate by parts twice, obtaining 
\begin{equation}
	\int_{M_p\setminus B(z_0; \varepsilon)} \dd^p z \sqrt{g} \ee^f\, G_{f,z_0} \Delta_f\left(\xi\right)  = \int_{\mathbb{S}^{p-1}(z_0,\varepsilon)}\sqrt{g} \ee^f \left(\xi \eta \cdot \nabla G_{f, z_0}-  G_{f, z_0}\eta \cdot \nabla\xi  \right)
\end{equation}
where $\eta$ is the vector normal to the boundary and the integral is on the sphere with radius $\varepsilon$ centered at $z = z_0$. We also used the fact that the only boundary terms are at $|z-z_0| =  \varepsilon$, since other possible terms (e.g.~at infinity) would vanish by virtue of $\xi$ of being of compact support. In addition, in the limit $\varepsilon \to 0$ the second term in the parenthesis also vanishes, since for the above construction $\nabla \xi$ vanishes on a neighborhood of $z_0$.
Putting everything together, we thus have
\begin{equation}\label{eq:resGflemma}
	\xi(z_0) = \xi(z_0) \ee^{f(z_0)}\lim_{\varepsilon \to 0 }  \int_{\mathbb{S}^{p-1}(z_0,\varepsilon)}\sqrt{g}  \eta \cdot \nabla G_{f, z_0}\,.
\end{equation}

We can now repeat the same argument from scratch, but for $\Delta_0$ and its corresponding Green's function $G_{0,z_0}$. Doing so, results in \eqref{eq:resGflemma} with $f = 0$. 
Combining the two results, we find 
\begin{equation}
	e^{-f(z_0)} = \frac{\lim_{\varepsilon \to 0 }  \int_{\mathbb{S}^{p-1}(z_0,\varepsilon)}\sqrt{g}  \eta \cdot \nabla G_{f, z_0}}{\lim_{\varepsilon \to 0 }  \int_{\mathbb{S}^{p-1}(z_0,\varepsilon)}\sqrt{g}  \eta \cdot \nabla G_{0, z_0}} = \lim_{\varepsilon \to 0 }\frac{  \int_{\mathbb{S}^{p-1}(z_0,\varepsilon)}\sqrt{g}  \eta \cdot \nabla G_{f, z_0}}{  \int_{\mathbb{S}^{p-1}(z_0,\varepsilon)}\sqrt{g}  \eta \cdot \nabla G_{0, z_0}}\;.
\end{equation}

The result then follows by going in Riemann normal coordinates around $z_0$ and applying L'H\^{o}pital's rule.

\end{proof}

With a similar technique, we can also prove Lemma \ref{lemma:Rn}. 
\begin{proof}[Proof of Lemma \ref{lemma:Rn}]
	Following the proof of Lemma \ref{lemma:ef0} and in particular using Eq.~\eqref{eq:resGflemma} with $f = 0$, we can write 
\begin{equation}\label{eq:resLemma2}
	1 =  \lim_{\varepsilon \to 0 }  \int_{\mathbb{S}^{p-1}(z_0,\varepsilon)}\sqrt{g}  \eta \cdot \nabla G_{0, z_0}\;.
\end{equation}
Since we are taking the limit $\varepsilon \to 0$, we can use Riemann normal coordinates centered around $z = z_0$. This shows that $G_{0,z_0}$ has to approach the Green's function of the Laplacian in $\mathbb{R}^p$ for \eqref{eq:resLemma2} to be valid in this limit. Since the power behavior on the left hand side, and the constants on the right hand side in \eqref{eq:lemmaRn} agree with the ones for the Green's function of the Laplacian in $\mathbb{R}^p$,  this concludes the proof.

We can be more explicit and also check this final statement directly from Eq.~\eqref{eq:resLemma2}. 
In Riemann normal coordinates we have $\sqrt{g} = \text{vol}(\mathbb{S}^{p-1})r^{p-1}$, $\eta \cdot \nabla G_{0, z_0} = \partial_r G_{0, z_0}(r)$, where $r = |z-z_0|.$
Plugging in \eqref{eq:resLemma2} and performing the integral over the sphere gives
\begin{equation}\label{eq:finlemma2}
	1 = \text{Vol}(\mathbb{S}^{p-1}) \lim_{r \to 0}\frac{\partial_r G_{0, z_0}(r)}{r^{1-p}}= \text{Vol}(\mathbb{S}^{p-1})(2-p) \lim_{r \to 0}\frac{G_{0, z_0}(r)}{r^{2-p}}\;,
\end{equation}
where in the last step we used L'H\^{o}pital's rule. Expanding the constants in \eqref{eq:finlemma2} then proves the result.

\end{proof}

\bibliography{at}
\bibliographystyle{at}

\end{document}